\documentclass[preprint,12pt]{elsarticle}

\usepackage{amsmath,amssymb,amsfonts,amsthm,mathtools}
\usepackage{booktabs}
\usepackage{array}
\usepackage{graphicx}
\usepackage{tikz}
\usepackage{pgfplots}
\usepackage[hidelinks]{hyperref}
\pgfplotsset{compat=1.18}

\journal{International Journal of Approximate Reasoning}

\newtheorem{theorem}{Theorem}
\newtheorem{proposition}{Proposition}
\newtheorem{corollary}{Corollary}
\newtheorem{remark}{Remark}
\newtheorem{assumption}{Assumption}
\newtheorem{lemma}{Lemma}

\newcommand{\Sset}{\mathcal S}
\newcommand{\Lset}{\mathcal L}
\newcommand{\Pset}{\mathcal P}
\newcommand{\Kset}{\mathcal K}
\newcommand{\Aset}{\mathcal A}
\newcommand{\R}{\mathbb R}
\newcommand{\E}{\mathbb E}
\newcommand{\Prob}{\mathbb P}
\newcommand{\lowerT}{\underline{\mathcal T}}
\newcommand{\upperT}{\overline{\mathcal T}}
\newcommand{\lowerV}{\underline V}
\newcommand{\upperV}{\overline V}
\newcommand{\lowerE}{\underline E}
\newcommand{\upperE}{\overline E}
\newcommand{\NMB}{\mathrm{NMB}}

\begin{document}

\begin{frontmatter}

\title{Imprecise Transition Matrices for Markov Cohort Models: Lower and Upper Expectations with a Practical Health-Economic Application}

\author[medtronic,brown]{Rowan Iskandar\corref{cor1}}
\ead{rowan.iskandar@gmail.com}
\cortext[cor1]{Corresponding author.}
\address[medtronic]{Medtronic Trading S\`{a}rl, Tolochenaz, VD, Switzerland}
\address[brown]{Center for Evidence Synthesis in Health, Department of Health Services, Policy, and Practice, Brown University, Providence, RI, USA}

\begin{abstract}
In applied health research, Markov cohort models are built on a precisely specified transition probability matrix. However, in many applications the available evidence---transition counts, structural constraints, and treatment-effect data---identifies a set of admissible matrices rather than one uniquely justified matrix. This paper formulates an imprecise-probability extension in which inference yields lower and upper expectations over an evidence-compatible set of precise Markov cohort models. The contribution differs from existing imprecise Markov-chain work by focusing on finite-horizon cohort trajectories, additive accumulated outcomes, and transition matrices constructed from empirical transition counts. Under non-empty compact separately specified outgoing-row sets, the lower and upper accumulated outcomes are computed exactly by Bellman-style lower and upper transition operators. We prove the envelope theorem, reduction to the classical model, coherence properties of the lower transition operator, and algebraic conditions under which a single selected matrix yields a non-robust decision. We then show how multinomial transition counts induce admissible matrix sets through the Imprecise Dirichlet Model. A real-world cost-effectiveness example of patent foramen ovale closure after cryptogenic stroke illustrates the practical consequence: the empirical transition matrix slightly favors closure, whereas the imprecise analysis yields an incremental net monetary benefit interval crossing zero. The method provides both a rigorous lower-expectation formulation and a practical diagnostic for decisions that depend on transition probabilities not fully resolved by the evidence.
\end{abstract}

\begin{keyword}
imprecise probability \sep Markov cohort model \sep lower expectation \sep transition matrix \sep Imprecise Dirichlet Model \sep decision analysis \sep cost-effectiveness analysis
\end{keyword}

\end{frontmatter}

\section{Introduction}
\label{sec:introduction}

In applied research in health and medicine, Markov cohort models are often used to simulate the prognosis of a patient or a cohort of patients following an intervention in the absence of long-term data \cite{BeckPauker1983,SonnenbergBeck1993,Iskandar2018}. 
The prognosis is represented as a stochastic transition model over a finite set of health states.
These transitions are governed by transition probability matrices, which are the central numerical objects in such models. 
Once a matrix has been estimated, the cohort trajectories and all outcomes are precisely determined. 
Treating the matrix as fixed is appropriate when the transition probabilities are known or can be estimated with a high degree of certainty. 
However, it is less satisfactory when the matrix is estimated from sparse transition counts, heterogeneous evidence, and conflicting expert opinions. 
In those settings, several transition matrices may be compatible with the evidence.
This paper treats that situation as an imprecise-probability problem. Instead of assigning a probability distribution over possible transition matrices, the analyst specifies the set of matrices that remain compatible with the evidence and structural assumptions. 
Each element of this set can be viewed as a standard transition probability matrix defining a precise Markov model. 
Given this set of matrices, we can compute 
the smallest and largest expected outcomes. Computing these bounds is the lower-envelope view used in imprecise probability \cite{Walley1991,Augustin2014,TroffaesDeCooman2014} and in imprecise Markov models \cite{DeCoomanHermansQuaeghebeur2009,HermansSkulj2014,KrakDeBockSiebes2017}. 
The lower-envelope view should not be confused with a Bayesian average over a prior distribution: all admissible precise models are retained on equal footing.
Other related ideas include imprecise Markov chains and lower-transition-operator methods. 
Existing continuous-time work treats an imprecise model as a set of stochastic processes, computes lower expectations as tight lower bounds over that set, and develops lower transition operators as representational and computational tools for those lower expectations \cite{DeCoomanHermansQuaeghebeur2009,Skulj2015,KrakDeBockSiebes2017}. 

Discrete-time imprecise Markov chains have been developed in the same spirit, with global models and lower and upper expectations of functionals such as time averages and accumulated quantities \cite{DeCoomanDeBockLopatatzidis2016,TJoensDeBock2021}. Another related literature is robust Markov decision processes, in which dynamic programming over a \emph{rectangular} ambiguity set---one specified independently across states---yields an exact Bellman recursion \cite{NilimElGhaoui2005,Iyengar2005,Wiesemann2013}.

In this paper, we extend these ideas to finite-horizon cohort trajectories with additive outcomes, using the same state-vector and transition-matrix objects as cohort models.
In this discrete-time setting, the quantities of interest are accumulated outcomes, in addition to the probabilities of occupying a state at a future time point. 
Each such outcome is calculated by summing, over the time horizon, the outcome contribution of each state weighted by the fraction of the cohort occupying it, and may represent survival, cost, quality-adjusted survival, or net monetary benefit. This accumulated-outcome calculation is the formulation used in cohort models and cost-effectiveness analysis, where the cohort is repeatedly redistributed across states according to the transition matrix and outcomes are accumulated over the time horizon of the analyses \cite{SonnenbergBeck1993,Iskandar2018}.

The contributions of this paper are as follows. First, we explicate an imprecise Markov cohort model as a set of admissible transition-matrix sequences and give its lower- and upper-expectation interpretation for additive cohort outcomes, including outcomes that depend on transitions as well as on occupied states. Second, under non-empty compact separately specified outgoing-row sets, we derive exact lower and upper Bellman recursions, prove the coherence of the associated lower transition operators, and show that the resulting accumulated-outcome functional is itself a coherent lower prevision over the admissible set. Third, we establish a performance-difference identity that exactly captures the gap between any two admissible models, and use it to obtain an algebraic diagnostic for determining when the result from a single selected matrix is invariant over the admissible set and when that matrix yields a non-robust decision. Fourth, we present a practical construction of admissible transition matrices from multinomial transition counts using the Imprecise Dirichlet Model (IDM), show that the induced row set coincides exactly with the IDM credal set, and demonstrate the method with a real-world example from a cost-effectiveness analysis.

\section{Preliminaries and notation}
\label{sec:preliminaries}

Let
\[
\Sset=\{S_1,\ldots,S_s\}
\]
be a finite set of mutually exclusive states, where \(s\) is the number of states. We denote by
\[
\Lset(\Sset)=\{f:\Sset\to\R\}
\]
the set of all real-valued functions on \(\Sset\). Because \(\Sset\) is finite, elements of \(\Lset(\Sset)\) will be written as column vectors \(f=(f(1),\ldots,f(s))^\top\), where \(f(i)=f(S_i)\).

A transition probability matrix is an \(s\times s\) matrix \(P=(p_{ij})_{i,j=1}^s\) satisfying \(p_{ij}\ge0\) and \(\sum_{j=1}^s p_{ij}=1\) for every \(i=1,\ldots,s\). Its \(i\)th row is denoted by \(P_{i\cdot}\). If \(X_t\) is the state occupied by one individual at time step \(t\), then \(p_{ij,t}=\Prob(X_{t+1}=S_j\mid X_t=S_i)\).

A cohort vector is a row vector
\[
m_t=(m_{t,1},\ldots,m_{t,s}),\qquad m_{t,j}\ge0,\qquad \sum_{j=1}^s m_{t,j}=1.
\]
It represents the distribution of \(X_t\): \(m_{t,j}=\Prob(X_t=S_j)\). The deterministic cohort trajectory \(m_0,\ldots,m_T\) is therefore the sequence of marginal distributions of the stochastic process.

For \(t=0,\ldots,T\), let
\[
g_t\in\Lset(\Sset)
\]
be an outcome-contribution function. The number \(g_t(i)\) is the amount of the chosen outcome assigned to a person in state \(S_i\) at time step \(t\). Depending on the application, \(g_t\) may represent survival, utility, cost, net monetary benefit, or an indicator for occupying a target state. Because \(g_t\) is allowed to depend on the time step, it also represents discounting and other time-varying valuations: a per-step discount factor \(\beta\in(0,1]\) is encoded by \(g_t(i)=\beta^{t}\tilde g_t(i)\) for an undiscounted contribution \(\tilde g_t\). Some applications additionally assign part of the outcome to the act of transitioning; this transition-based case is treated at the end of Section~\ref{sec:imprecise}.

\section{Classical Markov cohort model}
\label{sec:classical}

Given an initial distribution \(m_0\) and transition matrices \(P_0,\ldots,P_{T-1}\), the classical cohort recursion is
\begin{equation}
m_{t+1}=m_tP_t,\qquad t=0,\ldots,T-1.
\label{eq:forward}
\end{equation}
The accumulated cohort outcome is
\begin{equation}
A(P_{0:T-1})=\sum_{t=0}^{T}m_tg_t,
\label{eq:classical-accumulated}
\end{equation}
where \(P_{0:T-1}\) denotes the sequence \(P_0,\ldots,P_{T-1}\), and \(m_t\) is generated by \eqref{eq:forward}. Equation \eqref{eq:classical-accumulated} is the usual cohort accounting: multiply the fraction of the cohort in each state by the outcome contribution assigned to that state, then add across states and time steps.

The same accumulated outcome can be written backward. Define \(V_t\in\Lset(\Sset)\) by
\[
V_t(i)=\E\left[\sum_{\ell=t}^{T}g_\ell(X_\ell)\,\middle|\,X_t=S_i\right],
\]
where \(\ell\) indexes future time steps. Then \(V_T=g_T\). For \(t<T\), conditioning on the next state gives the Bellman recursion
\begin{equation}
V_T=g_T,\qquad
V_t=g_t+P_tV_{t+1},\qquad t=T-1,\ldots,0.
\label{eq:classical-bellman}
\end{equation}
The forward and backward forms agree:
\[
m_0V_0=A(P_{0:T-1}).
\]
The proof is a direct unrolling of the recursion and is given in the appendix.

The key term in the Bellman recursion \eqref{eq:classical-bellman} is \(P_tV_{t+1}\). Its \(i\)th component is
\[
(P_tV_{t+1})(i)=\sum_{j=1}^s p_{ij,t}V_{t+1}(j).
\]
This is the expected outcome remaining after one transition from current state \(S_i\). If \(P_t\) is fixed, this weighted average is fixed. If several transition matrices remain compatible with the evidence, the same weighted average has a lower and an upper value. That observation is the computational link between Markov cohort models and lower transition operators.

\section{Imprecise transition matrices}
\label{sec:imprecise}

For each time step \(t=0,\ldots,T-1\), let \(\Pset_t\) be a non-empty set of admissible transition probability matrices. A transition-matrix sequence is admissible if
\[
P_{0:T-1}\in\Aset:=\Pset_0\times\cdots\times\Pset_{T-1}.
\]
The imprecise model is the set of precise Markov cohort models generated by all sequences in \(\Aset\). For any additive outcome \(A(P_{0:T-1})\) defined in \eqref{eq:classical-accumulated}, the lower and upper accumulated outcomes are
\begin{equation}
\underline A=\inf_{P_{0:T-1}\in\Aset}A(P_{0:T-1}),
\qquad
\overline A=\sup_{P_{0:T-1}\in\Aset}A(P_{0:T-1}).
\label{eq:envelope}
\end{equation}
These are lower and upper expectations over the admissible set of precise Markov models.

The tractable case used for the formal results is the row-separable, or separately specified, one, stated in the following assumption.

\begin{assumption}[Compact separately specified outgoing rows]
\label{ass:compact-row-sets}
For every \(t=0,\ldots,T-1\) and every \(i=1,\ldots,s\), \(\Kset_{i,t}\) is a non-empty compact set of complete outgoing rows
\[
p_{i,t}=(p_{i1,t},\ldots,p_{is,t}),\qquad p_{ij,t}\ge0,\qquad \sum_{j=1}^s p_{ij,t}=1.
\]
The admissible matrix set at time step \(t\) is
\begin{equation}
\Pset_t=\{P_t:P_{t,i\cdot}\in\Kset_{i,t},\ i=1,\ldots,s\}.
\label{eq:row-separable-set}
\end{equation}
\end{assumption}

The interpretation is that one admissible outgoing row is chosen for each current state and the rows are stacked to form a transition matrix. Compactness guarantees that the row-wise infima and suprema below are attained. Separately specified rows are a structural assumption about the admissible set: the admissible outgoing distribution chosen for one state does not constrain the choice for any other state. This assumption can fail when the rows are linked, through any of three mechanisms that recur below. A \emph{shared parameter} is a single quantity---such as an underlying event rate or a relative risk---that enters more than one row, so those rows cannot be moved independently. A \emph{calibration constraint} requires the matrix to reproduce an observed aggregate, such as overall survival or disease prevalence, which couples entries across rows. A \emph{treatment-effect model} derives one intervention's transitions from another's by applying an effect measure, such as a relative risk or hazard ratio, so the two arms' rows are tied together. Whenever rows are linked in any of these ways, \eqref{eq:row-separable-set} does not hold and the linked matrix set must be optimized as a whole.

\begin{remark}[Time-step-wise versus fixed-matrix admissibility]
\label{rem:rectangular}
The set \(\Aset=\Pset_0\times\cdots\times\Pset_{T-1}\) allows an admissible matrix to be chosen at each time step. This product structure is the standard rectangular interpretation used by the lower and upper Bellman recursions. If the modeling assumption is instead that a single unknown matrix is fixed for all time steps (so that \(\Pset_0=\cdots=\Pset_{T-1}=:\Pset\)), the admissible set is the diagonal \(\{(P,\ldots,P):P\in\Pset\}\subseteq\Aset\). Two consequences follow. First, since the diagonal is contained in the rectangular set, the rectangular envelope is a valid \emph{outer} bound for the fixed-matrix envelope: the rectangular lower (upper) value is no larger (no smaller) than its fixed-matrix counterpart. Second, the two envelopes \emph{coincide} whenever the row-wise minimizer and maximizer selected in \eqref{eq:imprecise-bellman-ijar} can be taken to be the same matrix at every time step. A sufficient condition is that the one-step objective is optimized at a time-invariant row; this holds in particular when each row set is governed by a scalar parameter and the accumulated outcome is monotone in that parameter, in which case the optimizing row is pinned to a parameter endpoint at every step. The example in Section~\ref{sec:pfo} satisfies this condition, so its fixed-matrix endpoint evaluation equals the exact rectangular envelope; when no such time-invariant optimizer exists, the fixed-matrix envelope must be computed by global optimization over \(P\).
\end{remark}

For \(f\in\Lset(\Sset)\), define the lower and upper transition operators
\begin{equation}
(\lowerT_t f)(i)=\inf_{p_{i,t}\in\Kset_{i,t}}\sum_{j=1}^s p_{ij,t}f(j),
\qquad
(\upperT_t f)(i)=\sup_{p_{i,t}\in\Kset_{i,t}}\sum_{j=1}^s p_{ij,t}f(j),
\label{eq:lower-upper-operators-ijar}
\end{equation}
for \(i=1,\ldots,s\). These are lower and upper envelopes of the one-step conditional expectation of \(f(X_{t+1})\), conditional on \(X_t=S_i\).

\begin{proposition}[Coherence properties]
\label{prop:coherence}
Under Assumption~\ref{ass:compact-row-sets}, \(\lowerT_t\) is a coherent lower transition operator in the following finite-state sense. For all \(f,h\in\Lset(\Sset)\), all constants \(c\in\R\), all \(\mu\ge0\), and all states \(S_i\),
\[
(\lowerT_t f)(i)\ge \min_{1\le j\le s} f(j),\qquad
\lowerT_t(f+h)\ge \lowerT_t f+\lowerT_t h,
\]
\[
\lowerT_t(\mu f)=\mu\lowerT_t f,\qquad
\lowerT_t(f+c)=\lowerT_t f+c.
\]
It is also monotone: if \(f\le h\) componentwise, then \(\lowerT_t f\le \lowerT_t h\). The upper operator is conjugate to the lower operator:
\[
\upperT_t f=-\lowerT_t(-f).
\]
The lower bound, super-additivity, and non-negative homogeneity are the defining axioms of a coherent lower prevision in the sense of \cite{Walley1991,TroffaesDeCooman2014}; constant additivity, monotonicity, and the conjugacy relation are consequences. In particular, combining the lower bound with conjugacy gives the two-sided envelope
\[
\min_{1\le j\le s} f(j)\le (\lowerT_t f)(i)\le (\upperT_t f)(i)\le \max_{1\le j\le s} f(j).
\]
\end{proposition}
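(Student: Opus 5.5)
The plan is to work row by row. For fixed \(t\) and \(i\), the quantity \((\lowerT_t f)(i)\) is the infimum over the non-empty compact set \(\Kset_{i,t}\) of the linear functional \(p\mapsto \sum_j p_j f(j)\). That functional is continuous, so by Assumption~\ref{ass:compact-row-sets} the infimum is attained and finite. Every property in Proposition~\ref{prop:coherence} is then a statement about lower envelopes of linear expectations, which I would verify directly from the definition \eqref{eq:lower-upper-operators-ijar}. Each property is checked componentwise in \(i\), so the operator-level statements follow at once.

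\textbf{The three axioms.}
\begin{itemize}
\item \emph{Lower bound.} Every admissible row is a probability vector (\(p_{ij,t}\ge0\), \(\sum_j p_{ij,t}=1\)). Hence \(\sum_j p_{ij,t}f(j)\ge \min_j f(j)\) for each row, and taking the infimum preserves this.
\item \emph{Super-additivity.} For any fixed \(p\in\Kset_{i,t}\),
\[
\sum_j p_j(f+h)(j)=\sum_j p_jf(j)+\sum_j p_jh(j)\ge (\lowerT_t f)(i)+(\lowerT_t h)(i).
\]
Taking the infimum over \(p\) on the left gives the claim. The inequality can be strict, because the two separate infima may be attained at different rows.
\item \emph{Non-negative homogeneity.} For \(\mu>0\), multiplying by \(\mu\) commutes with the infimum. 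For \(\mu=0\), both sides are \(0\), since \(\Kset_{i,t}\) is non-empty.
\end{itemize}

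\textbf{The consequences.} I would derive these from the linear structure directly rather than from the axioms, since the direct route is shorter.
\begin{itemize}
\item \emph{Constant additivity.} Because \(\sum_j p_j=1\), we have \(\sum_j p_j(f(j)+c)=\sum_j p_jf(j)+c\) for every admissible row. Taking the infimum gives the result.
\item \emph{Monotonicity.} If \(f\le h\), then \(p_j\ge0\) gives \(\sum_j p_jf(j)\le\sum_j p_jh(j)\) for each row \(p\). Hence \((\lowerT_t f)(i)\le \sum_j p_jh(j)\) for all \(p\), and taking the infimum on the right yields \(\lowerT_t f\le\lowerT_t h\).
\item \emph{Conjugacy.} This follows from \(\sup_p \sum_j p_jf(j)=-\inf_p\sum_j p_j(-f(j))\).
\end{itemize}

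\textbf{The two-sided envelope.} Apply the lower bound to \(-f\), giving \((\lowerT_t(-f))(i)\ge -\max_j f(j)\). Conjugacy then gives \((\upperT_t f)(i)\le\max_j f(j)\). The middle inequality \((\lowerT_t f)(i)\le(\upperT_t f)(i)\) holds because an infimum over a non-empty set does not exceed the supremum over the same set.

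There is no substantial obstacle. The only points needing care are where non-emptiness is used (the \(\mu=0\) case and the middle inequality) and where compactness ensures the envelopes are finite and attained. The step I would treat most carefully is super-additivity, making sure the infimum is taken only after fixing a common row for \(f+h\).
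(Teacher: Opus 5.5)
Your proposal is correct and follows essentially the same route as the paper's proof: each property is checked for a fixed admissible row \(p\in\Kset_{i,t}\) and then passed through the infimum, with constant additivity and monotonicity obtained directly from \(p\mathbf 1=1\) and \(p\ge 0\), and conjugacy from \(\sup_p pf=-\inf_p p(-f)\). Your explicit derivation of the two-sided envelope and your note on where non-emptiness is used (the case \(\mu=0\) and the middle inequality) fill in details the paper leaves implicit; note also that compactness is not needed for this proposition, since \(\min_j f(j)\le pf\le\max_j f(j)\) already makes the envelopes finite and attainment is never used.
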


\begin{theorem}[Envelope Bellman recursion]
\label{thm:bellman-envelope}
Assume Assumption~\ref{ass:compact-row-sets} holds for every \(t=0,\ldots,T-1\). Define
\begin{equation}
\lowerV_T=\upperV_T=g_T,\qquad
\lowerV_t=g_t+\lowerT_t\lowerV_{t+1},\qquad
\upperV_t=g_t+\upperT_t\upperV_{t+1},
\label{eq:imprecise-bellman-ijar}
\end{equation}
for \(t=T-1,\ldots,0\). Then, for every initial cohort vector \(m_0\),
\[
m_0\lowerV_0=\inf_{P_{0:T-1}\in\Aset}A(P_{0:T-1}),
\qquad
m_0\upperV_0=\sup_{P_{0:T-1}\in\Aset}A(P_{0:T-1}).
\]
\end{theorem}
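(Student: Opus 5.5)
We prove the lower statement; the upper one follows by applying it to $-g_t$, using the conjugacy $\upperT_t f=-\lowerT_t(-f)$ from Proposition~\ref{prop:coherence} and the identity $\sup A=-\inf(-A)$.

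The plan is a backward induction on $t$, comparing the lower value function with the classical value functions of all admissible tails. For a fixed sequence $P_{0:T-1}\in\Aset$, let $V_t^{P}$ denote the classical value functions from \eqref{eq:classical-bellman}. The forward--backward identity gives $A(P_{0:T-1})=m_0V_0^{P}$. We then prove, for each $t=T,\ldots,0$, two claims.
\begin{itemize}
\item[(a)] $\lowerV_t\le V_t^{P}$ componentwise for every admissible sequence.
\item[(b)] There exists an admissible tail $P^*_{t:T-1}$ with $V_t^{P^*}=\lowerV_t$ exactly.
\end{itemize}
Both hold trivially at $t=T$, since $\lowerV_T=V_T^P=g_T$.

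For step (a) at time $t$, fix a state $S_i$ and write $p_{i,t}$ for the $i$th row of $P_t$. The induction hypothesis gives $\lowerV_{t+1}\le V_{t+1}^P$, and monotonicity of $\lowerT_t$ from Proposition~\ref{prop:coherence} then gives
\[
\lowerV_t(i)=g_t(i)+(\lowerT_t\lowerV_{t+1})(i)\le g_t(i)+\sum_j p_{ij,t}\lowerV_{t+1}(j)\le g_t(i)+\sum_j p_{ij,t}V^P_{t+1}(j)=V_t^P(i).
\]
The first inequality holds because $p_{i,t}\in\Kset_{i,t}$, so the infimum is no larger than this particular weighted sum. The second holds because $p_{ij,t}\ge0$ and $\lowerV_{t+1}\le V^P_{t+1}$.

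For step (b), we use the compactness in Assumption~\ref{ass:compact-row-sets}. The map $p\mapsto\sum_j p_j\lowerV_{t+1}(j)$ is continuous and linear, so for each $i$ the infimum over $\Kset_{i,t}$ is attained at some row $p^*_{i,t}$. By row-separability \eqref{eq:row-separable-set}, stacking these minimizing rows gives a matrix $P^*_t\in\Pset_t$. Combining $P^*_t$ with the tail from the induction hypothesis, and using the product structure of $\Aset$, yields an admissible tail with $V^{P^*}_t=g_t+P^*_tV^{P^*}_{t+1}=g_t+P^*_t\lowerV_{t+1}=\lowerV_t$.

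The main subtlety is that the minimizing rows may depend on the state $i$ and on the time $t$. This dependence is harmless only because of rectangularity, in the form of both row-separability and the product over time steps; this is the step where Assumption~\ref{ass:compact-row-sets} is genuinely used. If either form of rectangularity failed, the stacked minimizer could fall outside $\Pset_t$ or the combined sequence outside $\Aset$, and the argument would break.

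To conclude, take $t=0$ and multiply by $m_0$. Since $m_0\ge0$ componentwise, claim (a) gives $m_0\lowerV_0\le m_0V^P_0=A(P_{0:T-1})$ for every admissible sequence. Claim (b) supplies an admissible sequence attaining equality. Hence the infimum is attained (it is a minimum) and equals $m_0\lowerV_0$ for every initial cohort vector simultaneously. Note that the optimal sequence does not depend on $m_0$, since the construction in (b) is componentwise.
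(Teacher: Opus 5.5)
Your proof is correct and follows the paper's argument essentially step for step. You use a backward induction showing $\lowerV_t\le V_t^P$ componentwise from the non-negativity of the rows, then obtain attainment by compactness and by stacking the minimizing rows (row-separability plus the product over time) into an admissible $P^\star\in\Aset$ that does not depend on $m_0$. The only cosmetic difference is that you derive the upper envelope from the lower one through the conjugacy $\upperT_t f=-\lowerT_t(-f)$ applied to $-g_t$, which implicitly needs the short induction $\lowerV_t^{(-g)}=-\upperV_t$, whereas the paper simply repeats the argument with suprema.
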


The theorem states that the lower and upper Bellman recursions compute the exact lower and upper expectations in \eqref{eq:envelope}. The proof is given in the appendix. The result is the discrete-time analogue of the lower-transition-operator view used in imprecise Markov models \cite{DeCoomanHermansQuaeghebeur2009,KrakDeBockSiebes2017}. The key simplifying condition is the row-separable structure: the worst outgoing row for one current state can be chosen without constraining the rows for the other current states.

\begin{corollary}[Reduction to the classical model]
\label{cor:singleton}
If every \(\Pset_t\) is a singleton \(\{P_t\}\), then \(\lowerT_t f=\upperT_t f=P_tf\) for all \(f\in\Lset(\Sset)\). Hence \eqref{eq:imprecise-bellman-ijar} reduces to the classical Bellman recursion \eqref{eq:classical-bellman}.
\end{corollary}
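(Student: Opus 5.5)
The plan is to verify the operator identity directly from the definitions in \eqref{eq:lower-upper-operators-ijar}, and then substitute it into the envelope recursion \eqref{eq:imprecise-bellman-ijar}. The first step is to identify the row sets. If \(\Pset_t=\{P_t\}\) is a singleton, then for each state \(S_i\) the set of rows that can appear at position \(i\) is \(\{P_{t,i\cdot}\}\). Under Assumption~\ref{ass:compact-row-sets}, \(\Pset_t\) is the full product of the sets \(\Kset_{i,t}\) given in \eqref{eq:row-separable-set}. If some \(\Kset_{i,t}\) contained two distinct rows, this product would contain two distinct matrices. Each \(\Kset_{i,t}\) is non-empty, so every \(\Kset_{i,t}\) must be the singleton \(\{P_{t,i\cdot}\}\).

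Next, the infimum and supremum in \eqref{eq:lower-upper-operators-ijar} are taken over a one-point set. Both are therefore attained and equal \(\sum_j p_{ij,t}f(j)=(P_tf)(i)\). Since this holds for every \(i\), we get \(\lowerT_t f=\upperT_t f=P_tf\) for all \(f\in\Lset(\Sset)\).

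Finally, substituting this identity into \eqref{eq:imprecise-bellman-ijar} gives \(\lowerV_T=\upperV_T=g_T\) and \(\lowerV_t=g_t+P_t\lowerV_{t+1}\). The same relation holds for \(\upperV_t\). A backward induction on \(t\) then shows \(\lowerV_t=\upperV_t=V_t\), where \(V_t\) is the classical value defined in \eqref{eq:classical-bellman}. As a consistency check, Theorem~\ref{thm:bellman-envelope} gives \(m_0\lowerV_0=m_0\upperV_0=A(P_{0:T-1})\), since \(\Aset\) is a singleton; this agrees with \(m_0V_0=A(P_{0:T-1})\).

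The only step needing care is the first one: deducing that each row set \(\Kset_{i,t}\) is a singleton from the fact that \(\Pset_t\) is a singleton. This relies on the product structure and on each \(\Kset_{i,t}\) being non-empty. The rest is immediate.
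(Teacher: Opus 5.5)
Your proposal is correct and follows the paper's proof: each \(\Kset_{i,t}\) reduces to the single row \(P_{t,i\cdot}\), so the infimum and supremum in \eqref{eq:lower-upper-operators-ijar} both equal \((P_tf)(i)\), and substituting into \eqref{eq:imprecise-bellman-ijar} gives \eqref{eq:classical-bellman}. The paper simply asserts that the row sets are singletons and leaves the substitution implicit. Your argument via the product structure \eqref{eq:row-separable-set} and non-emptiness, together with your explicit backward induction, spells out those steps.
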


\begin{remark}[Coherent lower prevision on accumulated outcomes]
\label{rem:coherent-prevision}
Theorem~\ref{thm:bellman-envelope} has a lower-prevision reading. Fix \(m_0\) and define the functional \(\lowerE(g_{0:T})=m_0\lowerV_0\) on outcome sequences \(g_{0:T}\). Because each one-step operator \(\lowerT_t\) is a coherent lower transition operator (Proposition~\ref{prop:coherence}) and \(m_0\) has non-negative entries summing to one, the composition in \eqref{eq:imprecise-bellman-ijar} preserves the lower bound, super-additivity, and non-negative homogeneity, together with constant additivity and monotonicity. Hence \(\lowerE\) is a coherent lower prevision on \(\Lset(\Sset)^{T+1}\), and \(\upperE=-\lowerE(-\cdot)\) is its conjugate upper prevision. The envelope is therefore not a heuristic sensitivity range but the lower expectation of a coherent imprecise model, consistent with the operator view of imprecise Markov chains \cite{DeCoomanHermansQuaeghebeur2009,KrakDeBockSiebes2017}.
\end{remark}

\paragraph{Transition-based outcome contributions}
\label{sec:transition-rewards}
Some outcomes are incurred when a transition occurs rather than while a state is occupied; an acute event cost charged on entry to an event state is the canonical example. Let \(r_t(i,j)\in\R\) be a one-step transition contribution, interpreted as the outcome assigned to a person who moves from \(S_i\) to \(S_j\) during step \(t\). The accumulated outcome \eqref{eq:classical-accumulated} generalizes to
\begin{equation}
A(P_{0:T-1})=\sum_{t=0}^{T}m_tg_t+\sum_{t=0}^{T-1}\sum_{i=1}^s\sum_{j=1}^s m_{t,i}\,p_{ij,t}\,r_t(i,j),
\label{eq:accumulated-transition}
\end{equation}
and the backward recursion \eqref{eq:classical-bellman} becomes
\begin{equation}
V_T=g_T,\qquad V_t(i)=g_t(i)+\sum_{j=1}^s p_{ij,t}\bigl(r_t(i,j)+V_{t+1}(j)\bigr),\qquad t=T-1,\ldots,0,
\label{eq:bellman-transition}
\end{equation}
for which the forward--backward equivalence \(m_0V_0=A(P_{0:T-1})\) holds verbatim. For the imprecise model, the only change to the envelope recursion \eqref{eq:imprecise-bellman-ijar} is that the operators act on \(r_t(i,\cdot)+\lowerV_{t+1}(\cdot)\) instead of \(\lowerV_{t+1}(\cdot)\):
\begin{equation}
\lowerV_t(i)=g_t(i)+\inf_{p_{i,t}\in\Kset_{i,t}}\sum_{j=1}^s p_{ij,t}\bigl(r_t(i,j)+\lowerV_{t+1}(j)\bigr),
\label{eq:imprecise-bellman-transition}
\end{equation}
and analogously for \(\upperV_t\) with the supremum. Because \(r_t(i,\cdot)\) is fixed and the objective in \eqref{eq:imprecise-bellman-transition} is still linear in the row \(p_{i,t}\), Assumption~\ref{ass:compact-row-sets} and the proof of Theorem~\ref{thm:bellman-envelope} apply unchanged, so \eqref{eq:imprecise-bellman-transition} computes the exact lower (upper) envelope of \eqref{eq:accumulated-transition}. The state-based development of Sections~\ref{sec:classical}--\ref{sec:imprecise} is the special case \(r_t\equiv0\). The performance-difference identity of Section~\ref{sec:bias} extends as well, with the row difference acting on the augmented one-step value \(r_t(i,j)+\widehat V_{t+1}(j)\). The example in Section~\ref{sec:pfo} uses exactly such a contribution for the acute recurrent-stroke cost, so it is an instance of \eqref{eq:accumulated-transition}.

\section{When can a single selected matrix mislead?}
\label{sec:bias}

For any transition-matrix sequence supplied to it, the classical cohort calculation returns the correct accumulated outcome; the difficulty is not the arithmetic but the input. When a single selected sequence is treated as uniquely justified, the outcome it produces---and the decision that outcome supports---can differ from those of other sequences equally compatible with the evidence. This section makes that dependence precise: a performance-difference identity gives the exact gap between any two admissible models, from which we obtain an algebraic condition for when the selected-matrix result is invariant over the admissible set, and a corresponding test for when it yields a non-robust decision.

Let \(\widehat P_{0:T-1}\in\Aset\) be the selected sequence used in a classical point analysis, and let \(\widehat V_t\) be the corresponding backward vector from \eqref{eq:classical-bellman}. For any admissible sequence \(Q_{0:T-1}\in\Aset\), let \(m_t^Q\) be the cohort trajectory generated by \(Q\).

\begin{proposition}[Performance-difference identity]
\label{prop:performance-difference}
For any \(Q_{0:T-1}\in\Aset\),
\begin{equation}
A(Q_{0:T-1})-A(\widehat P_{0:T-1})
=
\sum_{t=0}^{T-1}m_t^Q(Q_t-\widehat P_t)\widehat V_{t+1}.
\label{eq:performance-difference}
\end{equation}
\end{proposition}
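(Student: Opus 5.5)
The plan is a telescoping argument that uses the backward vectors \(\widehat V_t\) of the selected sequence as a potential function, evaluated along the cohort trajectory \(m_t^Q\) of the competing sequence. First, by the forward--backward equivalence from Section~\ref{sec:classical} (proved in the appendix), \(A(\widehat P_{0:T-1})=m_0\widehat V_0\). Since \(m^Q_0=m_0\), we can also write \(A(\widehat P_{0:T-1})=m^Q_0\widehat V_0\). For the other term, expand directly from \eqref{eq:classical-accumulated}:
\[
A(Q_{0:T-1})=\sum_{t=0}^{T}m_t^Q g_t .
\]

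Next, substitute \(g_t=\widehat V_t-\widehat P_t\widehat V_{t+1}\) for \(t<T\) and \(g_T=\widehat V_T\), both taken from the classical Bellman recursion \eqref{eq:classical-bellman} for \(\widehat P\). This gives
\[
A(Q_{0:T-1})=\sum_{t=0}^{T-1}m_t^Q\bigl(\widehat V_t-\widehat P_t\widehat V_{t+1}\bigr)+m_T^Q\widehat V_T .
\]
Now add and subtract \(m_t^QQ_t\widehat V_{t+1}=m_{t+1}^Q\widehat V_{t+1}\) inside each summand. This step uses the forward recursion \eqref{eq:forward} for \(Q\). The expression then splits into two parts:
\[
\sum_{t=0}^{T-1}\bigl(m_t^Q\widehat V_t-m_{t+1}^Q\widehat V_{t+1}\bigr)+m_T^Q\widehat V_T+\sum_{t=0}^{T-1}m_t^Q(Q_t-\widehat P_t)\widehat V_{t+1}.
\]
The first sum telescopes to \(m_0^Q\widehat V_0-m_T^Q\widehat V_T\). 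The boundary term \(m_T^Q\widehat V_T\) cancels, which leaves \(m_0\widehat V_0=A(\widehat P_{0:T-1})\). Subtracting this yields \eqref{eq:performance-difference}.

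The argument is purely algebraic. It does not need Assumption~\ref{ass:compact-row-sets}, and membership in \(\Aset\) matters only for interpreting the result. The only delicate step is the bookkeeping of the boundary term at \(t=T\), together with checking that the index range of the telescoping sum matches the range \(t=0,\dots,T-1\) of the correction sum. I would verify both by writing out the case \(T=1\) explicitly.

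For the transition-contribution extension mentioned after \eqref{eq:imprecise-bellman-transition}, the same computation goes through. One replaces \(\widehat P_t\widehat V_{t+1}\) by the row-wise sum \(\sum_j\widehat p_{ij,t}(r_t(i,j)+\widehat V_{t+1}(j))\) and adds the transition term of \eqref{eq:accumulated-transition} for \(Q\). The \(r_t\) terms weighted by \(Q\) then combine with those weighted by \(\widehat P\) into the stated augmented difference.
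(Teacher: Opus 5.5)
Your proof is correct and is essentially the paper's argument: a telescoping sum along the \(Q\)-trajectory using \(m_{t+1}^Q=m_t^QQ_t\). The one difference is that you telescope the single potential \(m_t^Q\widehat V_t\) against the forward sum \(\sum_t m_t^Qg_t\), whereas the paper telescopes \(m_t^Q(V_t^Q-V_t^{\widehat P})\) using both backward recursions, so your version never needs \(V^Q\); your remarks that Assumption~\ref{ass:compact-row-sets} is not needed and that the transition-reward extension works with the augmented value \(r_t(i,j)+\widehat V_{t+1}(j)\) are also correct.
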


Thus the point result is invariant over \(\Aset\) if and only if the right-hand side of \eqref{eq:performance-difference} is zero for every admissible \(Q_{0:T-1}\). Row by row,
\[
m_t^Q(Q_t-\widehat P_t)\widehat V_{t+1}
=
\sum_{i=1}^s m_{t,i}^Q\sum_{j=1}^s (q_{ij,t}-\widehat p_{ij,t})\,\widehat V_{t+1}(j).
\]
Unresolved transition probabilities can affect the accumulated outcome only if the current state is reached, the admissible outgoing row differs from the selected row, and the row difference moves probability mass among destination states with different remaining outcomes. This algebraic condition is the precise version of the practical warning that an empirical zero should not be confused with a structural impossibility.

For decisions, let \(D(P^a,P^b;\lambda)\) be the incremental net monetary benefit (INMB) for intervention \(a\) versus \(b\) at willingness-to-pay threshold \(\lambda\). Define
\[
\underline D=\inf_{(P^a,P^b)\in\Aset_{ab}}D(P^a,P^b;\lambda),
\qquad
\overline D=\sup_{(P^a,P^b)\in\Aset_{ab}}D(P^a,P^b;\lambda),
\]
where \(\Aset_{ab}\) is the admissible joint set for the two interventions. When the interventions are informed by separate evidence, \(\Aset_{ab}=\Aset_a\times\Aset_b\) is a product set; writing \(\NMB_a\) and \(\NMB_b\) for the net monetary benefit of each intervention, so that \(D=\NMB_a-\NMB_b\), the lower endpoint then factorizes as
\[
\underline D=\inf_{P^a\in\Aset_a}\NMB_a-\sup_{P^b\in\Aset_b}\NMB_b,
\]
and analogously \(\overline D=\sup_{P^a}\NMB_a-\inf_{P^b}\NMB_b\), so the two arms are optimized separately by the recursion of Theorem~\ref{thm:bellman-envelope}. If the arms share parameters, \(\Aset_{ab}\) is not a product and \(D\) must be optimized jointly over the linked set.

\begin{proposition}[Decision robustness]
\label{prop:decision-robustness}
Suppose the selected matrices recommend intervention \(a\), so that
\[
D(\widehat P^a,\widehat P^b;\lambda)>0.
\]
Then the recommendation is robust over \(\Aset_{ab}\) if \(\underline D>0\). If \(\underline D\le0\), the admissible set does not support a strict recommendation for \(a\) against \(b\). If \(\underline D<0\), at least one admissible pair of transition-matrix sequences favors \(b\).
\end{proposition}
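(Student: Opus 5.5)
The three claims follow directly from the definition of \(\underline D\) as an infimum over \(\Aset_{ab}\), together with the fact that the selected pair is itself admissible. No recursion is needed; Theorem~\ref{thm:bellman-envelope} only matters for computing \(\underline D\), not for the logic. I would treat the three cases in order.

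\textbf{Robustness when \(\underline D>0\).} By definition of the infimum, \(D(P^a,P^b;\lambda)\ge\underline D>0\) for every \((P^a,P^b)\in\Aset_{ab}\). Hence every admissible precise model strictly prefers \(a\), and the selected pair is one of them. So the recommendation does not depend on which admissible pair is chosen, which is what robustness means here.

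\textbf{The case \(\underline D\le0\).} I would use the standard lower-prevision reading: a strict preference for \(a\) over \(b\) is supported exactly when the lower expectation of the gain \(D\) is strictly positive. When \(\underline D\le 0\), this criterion fails, so the admissible set does not support a strict recommendation, even though the point estimate \(D(\widehat P^a,\widehat P^b;\lambda)\) is positive. I would also note the boundary subcase \(\underline D=0\). Here either some admissible pair attains \(D=0\), or pairs with arbitrarily small positive \(D\) exist, so no uniform positive margin separates the two interventions. When the infimum is attained, a pair with \(D=0\) gives indifference, which again rules out a strict preference. Attainment holds in the product case by compactness together with continuity of \(A\) in the matrices, using the factorization \(\underline D=\inf\NMB_a-\sup\NMB_b\) and Theorem~\ref{thm:bellman-envelope}. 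I would flag this as a remark, not a requirement.

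\textbf{The case \(\underline D<0\).} By the characterization of the infimum, for \(\varepsilon=-\underline D/2>0\) there is \((Q^a,Q^b)\in\Aset_{ab}\) with \(D(Q^a,Q^b;\lambda)<\underline D+\varepsilon<0\). That admissible pair strictly favors \(b\). Combined with \(D(\widehat P^a,\widehat P^b;\lambda)>0\), this shows that admissible models disagree on the sign of \(D\). In particular \(\underline D<0<\overline D\), since \(\overline D\ge D(\widehat P^a,\widehat P^b;\lambda)\) by admissibility of the selected pair.

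\textbf{Main obstacle.} The only real difficulty is interpretive, not technical. It lies in making ``does not support a strict recommendation'' precise in the \(\underline D\le0\) case, which I would do through Walley's criterion for strict desirability via lower previsions. The remaining steps are immediate consequences of the definition of the infimum.
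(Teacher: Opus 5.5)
Your proof is correct, but it is organized differently from the paper's. The paper first uses compactness of \(\Aset_{ab}\) and continuity of \(D\) to obtain, via the extreme value theorem, an admissible pair attaining \(\underline D\). That single minimizing pair then witnesses both remaining claims: it has \(D\le0\) when \(\underline D\le0\), and \(D<0\) when \(\underline D<0\).

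You instead prove the \(\underline D<0\) claim from the \(\varepsilon\)-characterization of the infimum, which needs neither compactness nor continuity. You treat the \(\underline D\le0\) claim definitionally, through a lower-prevision criterion for strict preference.

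Your route is more general. For linked (non-product) sets \(\Aset_{ab}\), which need not be closed, the paper's statement that \(\Aset_{ab}\) is compact ``under Assumption~\ref{ass:compact-row-sets}'' is not automatic. Your \(\varepsilon\)-argument goes through unchanged in that case.

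The paper's route gives the \(\underline D\le0\) clause a concrete, model-level meaning: some admissible pair exists in which \(a\) is not strictly better. Under that reading, attainment is not optional at the boundary \(\underline D=0\). Without it, \(D\) could be strictly positive on all of \(\Aset_{ab}\) with infimum \(0\), so every admissible model would still strictly favor \(a\).

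Your supplementary argument for attainment in the product case is therefore what lets your proof cover the paper's interpretation. That argument pairs the minimizer and maximizer constructed in the proof of Theorem~\ref{thm:bellman-envelope} through the factorization \(\underline D=\inf\NMB_a-\sup\NMB_b\). You should present it as required for that reading, not as a side remark.

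Your additional observation \(\underline D<0<\overline D\) is correct, given that the selected pair is itself admissible.
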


\section{Constructing admissible matrices from transition counts}
\label{sec:idm}

The previous sections assume that \(\Pset_t\) is specified. This section gives a data-based construction using multinomial transition counts and the IDM.

Fix current state \(S_i\). Let \(n_{ij}\) be the observed number of transitions from \(S_i\) to \(S_j\), let \(n_i=(n_{i1},\ldots,n_{is})\), and let \(N_i=\sum_{j=1}^s n_{ij}\). For the true outgoing row \(p_i=(p_{i1},\ldots,p_{is})\), the usual row likelihood is
\begin{equation}
n_i\mid p_i\sim\mathrm{Multinomial}(N_i,p_i).
\label{eq:multinomial-row}
\end{equation}
A classical empirical row uses \(\widehat p_{ij}=n_{ij}/N_i\) when \(N_i>0\). This estimator is convenient but treats observed zeros as zero transition probabilities.

The IDM uses a class of Dirichlet priors
\[
\{\mathrm{Dirichlet}(\alpha q_1,\ldots,\alpha q_s):q_j>0,\ \sum_{j=1}^s q_j=1\},
\]
where \(\alpha>0\) is a prior strength. After observing \(n_i\), the posterior class is
\[
\{\mathrm{Dirichlet}(n_{i1}+\alpha q_1,\ldots,n_{is}+\alpha q_s):q_j>0,\ \sum_{j=1}^s q_j=1\}.
\]
Optimizing the posterior mean of \(p_{ij}\) over this class gives
\begin{equation}
\underline p_{ij}=\frac{n_{ij}}{N_i+\alpha},
\qquad
\overline p_{ij}=\frac{n_{ij}+\alpha}{N_i+\alpha},
\qquad j=1,\ldots,s.
\label{eq:idm-bounds-ijar}
\end{equation}
The induced IDM row set is
\begin{equation}
\Kset_i^{IDM}
=
\left\{p_i:
p_{ij}\ge0,\ 
\sum_{j=1}^s p_{ij}=1,\ 
\underline p_{ij}\le p_{ij}\le \overline p_{ij},\ j=1,\ldots,s
\right\}.
\label{eq:idm-row-set-ijar}
\end{equation}
Repeating this construction for all current states and applying structural constraints gives the admissible transition matrices used in \eqref{eq:row-separable-set}. The interval endpoints in \eqref{eq:idm-bounds-ijar} are not independent parameters; the simplex condition in \eqref{eq:idm-row-set-ijar} preserves the fact that each row must sum to one. The next lemma records that this set is exactly the IDM credal set, so it satisfies Assumption~\ref{ass:compact-row-sets} and the upper bounds in \eqref{eq:idm-row-set-ijar} are in fact redundant.

\begin{lemma}[Exactness of the IDM row set]
\label{lem:idm-exact}
With \(\underline p_{ij},\overline p_{ij}\) as in \eqref{eq:idm-bounds-ijar}, the row set is
\[
\Kset_i^{IDM}
=\Bigl\{p_i:\ p_{ij}\ge \underline p_{ij}\ (j=1,\ldots,s),\ \textstyle\sum_{j=1}^s p_{ij}=1\Bigr\}
=\operatorname{conv}\{v_1,\ldots,v_s\},
\]
where
\[
v_k=\frac{n_i+\alpha e_k}{N_i+\alpha},\qquad k=1,\ldots,s,
\]
and \(e_k\) is the \(k\)th standard basis vector. Hence \(\Kset_i^{IDM}\) is a non-empty compact polytope (the set of IDM posterior means), and the upper bounds \(p_{ij}\le\overline p_{ij}\) follow from the lower bounds and the simplex constraint.
\end{lemma}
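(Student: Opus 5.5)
We prove the lemma by establishing a chain of three sets: \(\Kset_i^{IDM}\subseteq L_i\subseteq C_i\subseteq\Kset_i^{IDM}\). Here
\[
L_i=\{p_i:\ p_{ij}\ge\underline p_{ij}\ (j=1,\ldots,s),\ \textstyle\sum_j p_{ij}=1\},\qquad C_i=\operatorname{conv}\{v_1,\ldots,v_s\}.
\]
Nonemptiness, compactness and the polytope property then follow because \(C_i\) is the convex hull of finitely many points. Throughout we assume \(N_i>0\) or \(\alpha>0\), so the denominators are positive; \(\alpha>0\) holds by hypothesis. The first inclusion \(\Kset_i^{IDM}\subseteq L_i\) is immediate, since \(L_i\) simply drops some constraints from \eqref{eq:idm-row-set-ijar}. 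Note also that \(\underline p_{ij}\ge0\), so nonnegativity is implied in \(L_i\).

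\textbf{Upper bounds are redundant.} For \(p_i\in L_i\), the simplex constraint gives
\[
p_{ij}=1-\sum_{k\ne j}p_{ik}\le 1-\sum_{k\ne j}\frac{n_{ik}}{N_i+\alpha}=\frac{N_i+\alpha-(N_i-n_{ij})}{N_i+\alpha}=\overline p_{ij}.
\]
This shows \(L_i\subseteq\Kset_i^{IDM}\). Combined with the first inclusion, it yields the first equality of the lemma and the redundancy claim.

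\textbf{\(L_i\subseteq C_i\).} Take \(p_i\in L_i\) and set
\[
\lambda_k=\frac{(N_i+\alpha)p_{ik}-n_{ik}}{\alpha}.
\]
Each \(\lambda_k\ge0\) precisely because \(p_{ik}\ge\underline p_{ik}\). Their sum is
\[
\sum_k\lambda_k=\frac{(N_i+\alpha)-N_i}{\alpha}=1.
\]
Then
\[
\sum_k\lambda_k v_k=\frac{n_i+\alpha\lambda}{N_i+\alpha}=\frac{n_i+(N_i+\alpha)p_i-n_i}{N_i+\alpha}=p_i,
\]
so \(p_i\in C_i\). This explicit barycentric-coordinate construction is the step that needs care; the rest is bookkeeping. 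The map \(p_i\mapsto\lambda\) is an affine bijection between \(L_i\) and the standard simplex, which also shows that \(C_i\) is the posterior-mean set \(\{(n_i+\alpha q)/(N_i+\alpha)\}\).

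\textbf{\(C_i\subseteq\Kset_i^{IDM}\).} The set \(\Kset_i^{IDM}\) is defined by linear constraints and is therefore convex, so it suffices to check the vertices. Each \(v_k\) has entries summing to \((N_i+\alpha)/(N_i+\alpha)=1\). Its \(j\)th entry equals \(\underline p_{ij}\) for \(j\ne k\) and equals \(\overline p_{ik}\) for \(j=k\), so it satisfies both the lower and upper bounds.

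\textbf{Connection to the IDM and Assumption~\ref{ass:compact-row-sets}.} The posterior-mean description matches the IDM with \(q\) ranging over the closed simplex. The open simplex in the prior class yields the relative interior, and its closure is \(C_i\); we state this as the reason the set is called the IDM credal set. Finally, compactness and nonemptiness of \(C_i\) give Assumption~\ref{ass:compact-row-sets} for each row.
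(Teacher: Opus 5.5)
Your proof is correct and follows the paper's argument essentially step for step. You use the same redundancy computation for the upper bounds and the same barycentric coordinates: your $\lambda_k$ is the paper's $w_k=(p_{ik}-\underline p_{ik})(N_i+\alpha)/\alpha$. The paper states the converse inclusion in one line; you prove it by convexity and a vertex check. Your remark that the prior class with $q_j>0$ yields only the relative interior of $\operatorname{conv}\{v_1,\ldots,v_s\}$, whose closure is the full polytope, correctly sharpens the lemma's parenthetical ``set of IDM posterior means,'' a point the paper leaves implicit.
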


\begin{proof}
If \(\sum_j p_{ij}=1\) and \(p_{ij}\ge\underline p_{ij}\) for all \(j\), then for each \(k\),
\(p_{ik}=1-\sum_{j\ne k}p_{ij}\le 1-\sum_{j\ne k}\underline p_{ij}=1-\tfrac{N_i-n_{ik}}{N_i+\alpha}=\tfrac{n_{ik}+\alpha}{N_i+\alpha}=\overline p_{ik}\),
so the upper bounds are redundant. Setting \(w_k=(p_{ik}-\underline p_{ik})(N_i+\alpha)/\alpha\) gives \(w_k\ge0\), \(\sum_k w_k=\bigl(1-N_i/(N_i+\alpha)\bigr)(N_i+\alpha)/\alpha=1\), and \(p_i=\sum_k w_k v_k\); conversely every such convex combination meets the constraints. Thus the three sets coincide.
\end{proof}

The framework does not depend on the IDM. Any non-empty compact row set \(\Kset_{i,t}\)---for instance one defined by a coherent lower prevision, a credal set from linear probability constraints, interval probabilities, or an \(\epsilon\)-contamination model---can be used in \eqref{eq:lower-upper-operators-ijar} and Theorem~\ref{thm:bellman-envelope}. We use the IDM in the example because it maps transition counts directly to row sets.

\section{Computational algorithm}
\label{sec:computation}

We provide a computational algorithm and the accompanying code as supplementary files. The computation follows directly from Theorem~\ref{thm:bellman-envelope}. For a selected outcome contribution \(g_t\), set \(\lowerV_T=\upperV_T=g_T\). For \(t=T-1,\ldots,0\), compute the lower and upper one-step expectations in \eqref{eq:lower-upper-operators-ijar} and update \(\lowerV_t,\upperV_t\) using \eqref{eq:imprecise-bellman-ijar}. The final interval for the initial cohort is
\[
[m_0\lowerV_0,\ m_0\upperV_0].
\]

For each state \(S_i\), the optimization in \eqref{eq:lower-upper-operators-ijar} is a linear program:
\[
\min_{p_{i,t}\in\Kset_{i,t}}\sum_{j=1}^s p_{ij,t}f(j)
\quad\text{or}\quad
\max_{p_{i,t}\in\Kset_{i,t}}\sum_{j=1}^s p_{ij,t}f(j).
\]
For IDM row sets, the constraints are linear: bounds, non-negativity, and summing to one. Hence the computation is transparent and scales linearly in the number of time steps once the row-level linear programs are solved.

The algorithm is as follows:

\begin{enumerate}
\item Define the states, initial distribution, time horizon, and outcome contributions \(g_t\).
\item Estimate an empirical transition matrix for comparison.
\item Build IDM row sets from transition counts using \eqref{eq:idm-bounds-ijar}--\eqref{eq:idm-row-set-ijar}.
\item Apply the lower and upper Bellman recursion \eqref{eq:imprecise-bellman-ijar}.
\item Report the empirical point result together with the lower and upper outcomes.
\item For decisions, optimize INMB directly rather than optimizing costs and effects separately.
\end{enumerate}


\section{Real-world application: patent foramen ovale closure after cryptogenic stroke}
\label{sec:pfo}

We use a cost-effectiveness example to demonstrate how the proposed method can change a decision. The clinical question is whether patent foramen ovale (PFO) closure is cost-effective compared with medical therapy alone for secondary prevention after cryptogenic stroke. Published economic evaluations have generally found PFO closure cost-effective or cost-saving over longer horizons \cite{Leppert2018PFOCEA}. The specific aim of this application is to examine the effect of sparse recurrent-stroke evidence near a decision boundary. The code for this example is included as supplementary material.
We also conduct an independent verification: rather than evaluating the model at the interval endpoints, we use the full backward recursion of Theorem~\ref{thm:bellman-envelope}, applying the lower and upper transition operators at every time step in the transition-reward form~\eqref{eq:imprecise-bellman-transition} to assign a recurrent-stroke cost to each transition.

\subsection{Data on clinical events}

The example uses recurrent-stroke data from two high-risk PFO trials (Table~\ref{tab:pfo-evidence-ijar}). In CLOSE, no stroke occurred among 238 patients assigned to PFO closure, whereas 14 strokes occurred among 235 patients assigned to antiplatelet therapy \cite{Mas2017CLOSE}. In DEFENSE-PFO, 60 patients were enrolled in each arm; ischemic stroke occurred in 0\% with closure and 10.5\% with medical therapy \cite{Lee2018DEFENSEPFO}. We represent the latter as six events in the medical-therapy arm. Patient-years are approximated as the trial size times the follow-up period.

\begin{table}[!htbp]
\centering
\caption{Transition evidence for recurrent stroke after cryptogenic stroke with high-risk patent foramen ovale (PFO). PY = patient-years.}
\label{tab:pfo-evidence-ijar}
\resizebox{\textwidth}{!}{%
\begin{tabular}{llrrrr}
\toprule
Trial & Strategy & Patients & Follow-up & Events & PY \\
\midrule
CLOSE & PFO closure & 238 & 5.3 & 0 & 1261.4 \\
CLOSE & Medical therapy & 235 & 5.3 & 14 & 1245.5 \\
DEFENSE-PFO & PFO closure & 60 & 2.8 & 0 & 168.0 \\
DEFENSE-PFO & Medical therapy & 60 & 2.8 & 6 & 168.0 \\
\midrule
Pooled & PFO closure & 298 & -- & 0 & 1429.4 \\
Pooled & Medical therapy & 295 & -- & 20 & 1413.5 \\
\bottomrule
\end{tabular}%
}
\end{table}

Let \(p_{\mathrm{stroke}}\) be the one-time-step probability of recurrent stroke. The empirical probability is 0 for PFO closure and 0.01415 for medical therapy. With IDM prior strength \(\alpha=2\), the IDM intervals are
\[
0\le p_{\mathrm{stroke}}^{\mathrm{closure}}\le 0.00140,
\qquad
0.01413\le p_{\mathrm{stroke}}^{\mathrm{medical}}\le 0.01554.
\]
The closure-arm upper bound is small, but it is not zero. This nonzero upper bound is the distinction the empirical matrix hides. Here the patient-years at risk play the role of the multinomial denominator \(N_i\) in \eqref{eq:idm-bounds-ijar}, so \(p_{\mathrm{stroke}}\) is read as a per-time-step rate; for the small magnitudes involved, the difference between a rate and a one-step probability (\(p\approx 1-e^{-\text{rate}}\)) is negligible. Imprecision is placed only on the recurrent-stroke probability, with the background death probabilities held fixed, so the admissible ``no recurrent stroke'' row is the one-parameter subfamily \(\{(1-p-d_0,\,p,\,d_0):p\in[\underline p,\overline p]\}\) of the full IDM simplex of Lemma~\ref{lem:idm-exact}.

\subsection{Economic model}

The model has three states: no recurrent stroke, after recurrent stroke, and dead. Everyone begins in the no-recurrent-stroke state. The horizon is seven years with 3\% discounting. The willingness-to-pay threshold is \(\lambda=\$150{,}000\) per quality-adjusted life-year (QALY), matching the threshold used in the published PFO economic evaluation \cite{Leppert2018PFOCEA}. Table~\ref{tab:pfo-inputs-ijar} gives the compact model inputs.

\begin{table}[!htbp]
\centering
\caption{Inputs for the compact patent foramen ovale (PFO) closure example. QALY = quality-adjusted life-year.}
\label{tab:pfo-inputs-ijar}
\begin{tabular}{p{0.42\textwidth}p{0.48\textwidth}}
\toprule
Model parameters & Value \\
\midrule
States & no recurrent stroke; after recurrent stroke; dead \\
Initial distribution & \(m_0=(1,0,0)\) \\
Horizon & 7 years \\
Discount rate & 3\% per time step \\
Willingness-to-pay threshold & \(\$150{,}000\) per QALY \\
One-time closure cost & \(\$15{,}000\) \\
Annual cost without recurrent stroke & \(\$200\) \\
Acute recurrent-stroke cost & \(\$40{,}000\) \\
Annual post-stroke cost & \(\$8{,}000\) \\
Utility without recurrent stroke & 0.86 \\
Utility after recurrent stroke & 0.65 \\
Background death probability & 0.002 before recurrent stroke; 0.050 after recurrent stroke \\
\bottomrule
\end{tabular}
\end{table}

The empirical matrix sets \(p_{\mathrm{stroke}}^{\mathrm{closure}}=0\). A precise symmetric Dirichlet posterior mean with strength \(\alpha=2\) gives \(p_{\mathrm{stroke}}^{\mathrm{closure}}=0.00070\) and \(p_{\mathrm{stroke}}^{\mathrm{medical}}=0.01484\). The imprecise analysis evaluates all matrices with recurrent-stroke probabilities in the IDM intervals above.

\subsection{Results}

Table~\ref{tab:pfo-results-ijar} reports discounted incremental net monetary benefit results for PFO closure versus medical therapy. The empirical transition matrix slightly favors closure: INMB is \(\$167\). The precise Dirichlet posterior mean gives the same qualitative conclusion. The IDM analysis changes the interpretation. Its INMB interval crosses zero, from \(-\$1{,}387\) to \(\$1{,}618\) (Figure~\ref{fig:pfo-inmb-ijar}). Thus the empirical matrix recommends closure, but the evidence-compatible matrix set does not support a robust recommendation at this threshold. Because INMB is monotone in each arm's single recurrent-stroke probability, the optimizing transition row is pinned to an interval endpoint at every time step; by Remark~\ref{rem:rectangular} the endpoint evaluation reported here coincides with the exact lower and upper envelopes of Theorem~\ref{thm:bellman-envelope}. The supplementary script \texttt{bellman\_envelope.R} confirms this by computing the same interval directly from the lower and upper transition operators \eqref{eq:imprecise-bellman-transition}.

\begin{table}[!htbp]
\centering
\caption{Discounted cost-effectiveness results for patent foramen ovale (PFO) closure versus medical therapy. Incremental net monetary benefit (INMB) is reported at a willingness-to-pay threshold \(\lambda=\$150{,}000\) per quality-adjusted life-year (QALY). IDM = Imprecise Dirichlet Model.}
\label{tab:pfo-results-ijar}
\resizebox{\textwidth}{!}{%
\begin{tabular}{lrrr}
\toprule
Analysis & Incremental QALYs & Incremental cost & INMB \\
\midrule
Empirical transition matrix & 0.0658 & \(\$9{,}705\) & \(\$167\) \\
Precise Dirichlet posterior mean & 0.0656 & \(\$9{,}727\) & \(\$115\) \\
IDM lower bound & 0.0591 & \(\$10{,}251\) & \(-\$1{,}387\) \\
IDM upper bound & 0.0721 & \(\$9{,}203\) & \(\$1{,}618\) \\
\bottomrule
\end{tabular}%
}
\end{table}

\begin{figure}[!htbp]
\centering
\begin{tikzpicture}
\begin{axis}[
    width=0.80\textwidth,
    height=0.28\textwidth,
    xmin=-1800, xmax=1900,
    ymin=0.35, ymax=1.65,
    xlabel={Incremental net monetary benefit for PFO closure},
    ytick={1},
    yticklabels={IDM interval},
    axis y line*=left,
    axis x line*=bottom,
    grid=none,
    tick style={line width=0.8pt},
    axis line style={line width=0.8pt},
    clip=false
]
\addplot[black, line width=0.9pt] coordinates {(0,0.35) (0,1.65)};
\addplot[blue!70!black, line width=3.0pt] coordinates {(-1387,1) (1618,1)};
\addplot[only marks, mark=*, mark size=2.7pt, blue!70!black] coordinates {(167,1)};
\node[anchor=south, font=\small] at (axis cs:167,1.08) {empirical \(=\$167\)};
\node[anchor=north east, font=\small] at (axis cs:-1387,0.92) {\(-\$1{,}387\)};
\node[anchor=north west, font=\small] at (axis cs:1618,0.92) {\(\$1{,}618\)};
\node[anchor=south west, font=\small] at (axis cs:30,1.45) {zero};
\end{axis}
\end{tikzpicture}
\caption{Empirical point incremental net monetary benefit (INMB) and Imprecise Dirichlet Model (IDM) interval. The empirical matrix favors patent foramen ovale (PFO) closure, but the IDM interval crosses zero.}
\label{fig:pfo-inmb-ijar}
\end{figure}

\subsection{Sensitivity to the threshold and the imprecision strength}
\label{sec:pfo-sensitivity}

The non-robust conclusion is not an artifact of the particular threshold or the IDM strength. Holding \(\alpha=2\) and sweeping the willingness-to-pay threshold \(\lambda\) (Table~\ref{tab:pfo-sensitivity}), the empirical point INMB changes sign at \(\lambda\approx\$147{,}500\) per QALY, whereas the lower and upper envelopes change sign at \(\lambda\approx\$173{,}500\) and \(\lambda\approx\$127{,}600\). The IDM interval therefore straddles zero throughout the band \(\$127{,}600<\lambda<\$173{,}500\): PFO closure is a robust choice only above this band and medical therapy only below it. At the \(\$150{,}000\) threshold used in the published evaluation \cite{Leppert2018PFOCEA} the decision is non-robust, and at the lower thresholds \(\$50{,}000\) and \(\$100{,}000\) the imprecise analysis robustly favors medical therapy, even though the empirical point INMB is positive at \(\$150{,}000\).

The interval width scales with the prior strength, as expected. Recomputing at the other conventional IDM strength \(\alpha=1\) gives the INMB interval \([-\$611,\ \$894]\) at \(\lambda=\$150{,}000\), against \([-\$1{,}387,\ \$1{,}618]\) at \(\alpha=2\); both straddle zero, so the non-robustness does not depend on the choice of \(\alpha\).

\begin{table}[!htbp]
\centering
\caption{Discounted incremental net monetary benefit (INMB) interval for patent foramen ovale (PFO) closure versus medical therapy by willingness-to-pay threshold \(\lambda\) (Imprecise Dirichlet Model (IDM) prior strength \(\alpha=2\)). Each interval is the exact lower and upper envelope of Theorem~\ref{thm:bellman-envelope}. \$/QALY = US dollars per quality-adjusted life-year.}
\label{tab:pfo-sensitivity}
\begin{tabular}{rrrl}
\toprule
Threshold \(\lambda\) (\$/QALY) & Lower INMB & Upper INMB & Robustly preferred strategy \\
\midrule
50{,}000  & \(-\$7{,}297\) & \(-\$5{,}596\) & Medical therapy \\
100{,}000 & \(-\$4{,}342\) & \(-\$1{,}989\) & Medical therapy \\
150{,}000 & \(-\$1{,}387\) & \(\$1{,}618\)  & None (non-robust) \\
200{,}000 & \(\$1{,}568\)  & \(\$5{,}225\)  & PFO closure \\
\bottomrule
\end{tabular}
\end{table}

This example is deliberately compact: it does not claim that PFO closure is cost-ineffective in general, but shows that near a decision boundary a zero event count in a finite trial can be decisive in the empirical transition matrix. The imprecise analysis distinguishes an observed zero from a structural zero and makes the resulting non-robustness visible.

\subsection{Behavior under increasing sparsity}
\label{sec:pfo-simulation}

To show that the phenomenon is not tied to the pooled sample size, we study the three analyses as the amount of evidence varies. We fix a ground truth in which closure genuinely lowers the per-step recurrent-stroke probability, from \(0.014\) under medical therapy to \(0.010\) under closure, but not by enough to justify its one-time cost: the true INMB is \(-\$10{,}790\), so the correct decision is medical therapy. For each arm we draw recurrent-stroke counts from a Poisson model with mean \(N\times p\), where \(N\) is the patient-years at risk per arm, and we vary \(N\) from \(25\) to \(3200\). For each simulated data set we form the empirical matrix, the precise symmetric Dirichlet posterior mean (\(\alpha=2\)), and the IDM interval (\(\alpha=2\)), and record the decision each implies at \(\lambda=\$150{,}000\) per QALY. The two point analyses always commit to closure or to medical therapy; the IDM analysis additionally \emph{abstains} when its INMB interval contains zero. Figure~\ref{fig:pfo-sparsity} reports, over \(16{,}000\) replications per sample size, how often each analysis is wrong and how often the IDM abstains.

Two patterns emerge. First, under sparsity the point analyses are frequently confident and wrong: the empirical matrix recommends the inferior arm in up to about a third of samples near \(N=50\), driven by closure-arm sampling zeros that mimic a highly effective treatment, and the precise Dirichlet posterior mean behaves almost identically, so its symmetric prior offers little protection in this regime. Second, the IDM analysis is almost never confidently wrong---its error rate stays below \(5\%\) throughout---because it abstains instead, reporting that the evidence does not resolve the decision. As the sample size grows the sampling zeros disappear, the abstention rate falls to zero, and all three analyses converge to the correct recommendation. The IDM thus trades the point analyses' confident errors for transparent abstentions and commits to the correct decision once the evidence can support it.

This convergence has a simple analytical counterpart. From the IDM bounds \eqref{eq:idm-bounds-ijar}, each admissible row spans an interval of half-width \(\alpha/(N_i+\alpha)=O(\alpha/N_i)\), where \(N_i\) is the patient-years informing that row. Because the accumulated outcome is linear in each row, the INMB interval half-width---and hence the width of the non-robust threshold band of Section~\ref{sec:pfo-sensitivity}---is \(O(\alpha/N)\) in the per-arm patient-years \(N\). The imprecision therefore contracts at rate \(1/N\) as evidence accumulates, which is the formal counterpart of the convergence seen in Figure~\ref{fig:pfo-sparsity}.

\begin{figure}[!htbp]
\centering
\begin{tikzpicture}
\begin{axis}[
    width=0.85\textwidth,
    height=0.42\textwidth,
    xmode=log,
    log basis x=2,
    xmin=20, xmax=4000,
    ymin=-0.03, ymax=1.03,
    xtick={25,50,100,200,400,800,1600,3200},
    xticklabels={25,50,100,200,400,800,1600,3200},
    x tick label style={font=\footnotesize},
    xlabel={Patient-years at risk per arm (\(N\))},
    ylabel={Probability},
    legend pos=north east,
    legend cell align=left,
    legend style={font=\footnotesize, draw=none, fill opacity=0.7, text opacity=1},
    tick style={line width=0.7pt},
    axis line style={line width=0.8pt},
    grid=major,
    grid style={gray!18},
    clip=false
]
\addplot[blue!70!black, line width=1.1pt, mark=*, mark size=1.6pt] coordinates {(25,0.240) (50,0.357) (100,0.222) (200,0.207) (400,0.101) (800,0.030) (1600,0.004) (3200,0.000)};
\addlegendentry{Empirical: wrong decision}
\addplot[red!75!black, line width=1.0pt, densely dashed, mark=square, mark size=1.5pt] coordinates {(25,0.240) (50,0.357) (100,0.222) (200,0.200) (400,0.100) (800,0.030) (1600,0.004) (3200,0.000)};
\addlegendentry{Precise Dirichlet: wrong decision}
\addplot[black, line width=1.1pt, mark=triangle*, mark size=2.0pt] coordinates {(25,0.004) (50,0.023) (100,0.027) (200,0.043) (400,0.027) (800,0.011) (1600,0.001) (3200,0.000)};
\addlegendentry{IDM: wrong decision}
\addplot[black, line width=1.0pt, dotted, mark=o, mark size=1.8pt] coordinates {(25,0.977) (50,0.931) (100,0.704) (200,0.498) (400,0.232) (800,0.065) (1600,0.009) (3200,0.000)};
\addlegendentry{IDM: abstains}
\end{axis}
\end{tikzpicture}
\caption{Decision outcomes under increasing evidence, for a ground truth in which medical therapy is the correct choice (true incremental net monetary benefit (INMB) \(=-\$10{,}790\)). Curves show the probability that each analysis recommends the wrong arm, and the probability that the Imprecise Dirichlet Model (IDM) abstains because its INMB interval contains zero, estimated from \(16{,}000\) simulated trials per sample size at a willingness-to-pay threshold \(\lambda=\$150{,}000\) per quality-adjusted life-year (QALY) and IDM prior strength \(\alpha=2\). The empirical and precise Dirichlet analyses almost coincide. The IDM is almost never confidently wrong: it abstains under sparsity and converges to the correct decision as evidence accumulates.}
\label{fig:pfo-sparsity}
\end{figure}

\section{Discussion}
\label{sec:discussion}

The paper develops a lower-expectation formulation of the Markov cohort model with imprecise transition probabilities. Rather than attach a single value to one selected matrix, the method reports the lower and upper expectation of an additive accumulated outcome over an evidence-compatible set of precise cohort models. The classical cohort model is recovered exactly when the admissible set is a singleton at every time step (Corollary~\ref{cor:singleton}); when the admissible set is larger, the lower and upper Bellman recursions of Theorem~\ref{thm:bellman-envelope} return the exact range of the outcome over all admissible models.

\paragraph{Main contribution}
The central result is the envelope theorem (Theorem~\ref{thm:bellman-envelope}), together with the coherence properties of Proposition~\ref{prop:coherence}. Under separately specified rows, the imprecise calculation is not a heuristic sensitivity analysis over hand-picked matrices; it is the exact lower and upper expectation of a coherent imprecise model in the sense of \cite{Walley1991,TroffaesDeCooman2014}, and it coincides with the lower- and upper-transition-operator view of imprecise Markov chains \cite{DeCoomanHermansQuaeghebeur2009,KrakDeBockSiebes2017}. Because the envelope is a coherent lower prevision, interval-valued cohort outcomes stand on the same footing as the point-valued outcomes of the classical model, and the bounds can be read as genuine expectations rather than as the spread of an ad hoc collection of scenarios.

\paragraph{When imprecision changes a decision}
The performance-difference identity of Proposition~\ref{prop:performance-difference} provides an exact diagnostic that complements these bounds. A single selected matrix can mislead only when imprecision both reaches states that the cohort actually occupies and redistributes probability mass among states whose remaining outcomes differ. Imprecision confined to unreachable states, or that moves mass between states with the same continuation value, leaves the decision unchanged. The diagnostic thus identifies in advance where transition uncertainty can and cannot overturn a recommendation, and explains why its consequences are concentrated in a few influential transitions rather than spread uniformly across the matrix.

\paragraph{Practical contribution}
The method is a modest extension of the classical cohort model rather than a departure from it. An analyst can begin from the same transition counts used to build a classical cohort model, construct IDM row sets, run the lower and upper recursions at essentially the cost of a few point evaluations, and report whether the empirical recommendation is robust to the dependence and sampling uncertainty the data leave unresolved. The PFO closure example shows that this is not a formality: the empirical transition matrix and the precise Dirichlet posterior mean both favor closure, yet the evidence-compatible interval straddles the decision boundary, and it continues to do so across the conventional range of willingness-to-pay thresholds and across both conventional IDM strengths. The sparsity experiment of Section~\ref{sec:pfo-simulation} shows that the interval behaves like a calibrated decision rule: it abstains when the evidence is too thin to support either arm, is almost never confidently wrong, and contracts at rate \(O(\alpha/N)\), so that it recovers the point decision as evidence accumulates.

\paragraph{Limitations}
There are two limitations which restrict the scope of the study results, though without the risk of undermining the rigor and utility of the approach in its intended setting. The first is the row-separable structure of Assumption~\ref{ass:compact-row-sets}, which is what makes the envelope recursion tractable: it lets the worst outgoing row for one state be chosen independently of the rows for the other states. Row separability is appropriate when outgoing rows are assessed separately by current state, which is the common case in applied cohort models, where each state's transitions are estimated from its own data. When the rows are instead linked---by a shared parameter, a calibration constraint, or a treatment-effect model---the approach does not break down: the lower and upper expectations remain well defined over the linked set, and, because any linked set is contained in the rectangular product set, the row-by-row envelope is still a valid outer bound for the linked problem (cf.\ Remark~\ref{rem:rectangular}) and can serve as a conservative screen before any costlier joint optimization. Only the row-wise decomposition, not the lower-expectation interpretation, is lost. The second limitation is that the IDM is only one way to convert evidence into an admissible set. Here too the core results are unaffected: the envelope theorem, the coherence of the lower and upper operators, and the performance-difference diagnostic hold for any non-empty compact row sets, so likelihood-based confidence regions, expert-elicited bounds, an imprecise treatment of Bayesian credible sets, or calibration-defined sets all plug into the same recursion. The IDM is simply the construction that matches multinomial transition counts; only its closed-form bounds and the \(O(\alpha/N)\) contraction rate are specific to that setting.

\paragraph{Future directions}
Several extensions follow naturally. Linked transition-matrix sets, in which rows co-vary through shared parameters or calibration constraints, would widen applicability at the cost of row-wise tractability and call for dedicated optimization algorithms, particularly for large state spaces where enumerating rows is impractical. A second direction is to combine imprecise transition matrices with parameter uncertainty analysis. The two answer different questions and are complementary rather than competing: imprecise transition matrices answer a robustness question---whether a recommendation survives every evidence-compatible dependence structure---whereas parameter uncertainty analysis answers a distributional question about the spread of outcomes under an assumed joint distribution of inputs. Reporting both a robustness interval and a distributional summary would give decision makers a fuller account of the uncertainty surrounding a transition-probability-driven decision.

\section*{CRediT authorship contribution statement}
Rowan Iskandar: Conceptualization, Methodology, Formal analysis, Software, Writing--original draft, Writing--review and editing.

\section*{Declaration of competing interest}
The author declares that he has no known competing financial interests or personal relationships that could have appeared to influence the work reported in this paper.

\section*{Funding}
No specific funding was received for this methodological work.

\section*{Data and code availability}
The case-study calculations are reproducible from the R scripts and Quarto guide supplied as supplementary files.

\section*{Acknowledgements}
The author thanks colleagues and reviewers who provided feedback on earlier versions of this manuscript.

\section*{Declaration of generative AI and AI-assisted technologies in the writing process}
During the preparation of this work the author used Claude (a large language model developed by Anthropic) in order to improve the language, readability, and formatting of portions of the manuscript. After using this tool, the author reviewed and edited the content as needed and takes full responsibility for the content of the publication.

\appendix

\section{Proofs}
\label{app:proofs}

\subsection{Forward--backward equivalence}

Unroll \eqref{eq:classical-bellman}. Since \(V_T=g_T\),
\[
V_0=g_0+P_0g_1+P_0P_1g_2+\cdots+P_0P_1\cdots P_{T-1}g_T.
\]
Multiplying by \(m_0\) gives
\[
m_0V_0
=
m_0g_0+m_0P_0g_1+\cdots+m_0P_0P_1\cdots P_{T-1}g_T.
\]
Because \(m_t=m_0P_0\cdots P_{t-1}\), with the empty product interpreted as the identity for \(t=0\), the right-hand side is \(\sum_{t=0}^{T}m_tg_t=A(P_{0:T-1})\).

\subsection{Proof of Proposition~\ref{prop:coherence}}

Fix \(t\) and \(i\). For any \(p_{i,t}\in\Kset_{i,t}\),
\[
\sum_{j=1}^s p_{ij,t}f(j)\ge \min_{1\le j\le s}f(j),
\]
which gives the lower bound after taking the infimum. For superadditivity,
\[
\inf_{p\in\Kset_{i,t}}p(f+h)
\ge
\inf_{p\in\Kset_{i,t}}pf+\inf_{p\in\Kset_{i,t}}ph,
\]
where \(pf\) denotes \(\sum_{j=1}^s p_jf(j)\). Non-negative homogeneity follows because, for \(\mu\ge0\),
\[
\inf_{p\in\Kset_{i,t}}p(\mu f)=\mu\inf_{p\in\Kset_{i,t}}pf.
\]
Constant additivity follows from \(p\mathbf 1=1\) for every probability row \(p\). Monotonicity follows because \(f\le h\) implies \(pf\le ph\) for every admissible row. Finally,
\[
-\lowerT_t(-f)(i)
=
-\inf_{p\in\Kset_{i,t}}\sum_{j=1}^s p_{ij,t}\{-f(j)\}
=
\sup_{p\in\Kset_{i,t}}\sum_{j=1}^s p_{ij,t}f(j)
=
\upperT_t f(i).
\]

\subsection{Proof of Theorem~\ref{thm:bellman-envelope}}

We prove the lower recursion; the upper recursion is identical with infima replaced by suprema. For an admissible sequence \(P_{0:T-1}\in\Aset\), let \(V_t^P\) denote the classical backward vector \eqref{eq:classical-bellman}, so that \(m_0V_0^P=A(P_{0:T-1})\) by the forward--backward equivalence. We establish two facts: (i) \(\lowerV_t\le V_t^P\) componentwise for every admissible \(P\) and every \(t\); and (ii) there exists an admissible \(P^\star\) with \(V_t^{P^\star}=\lowerV_t\). Both use only Assumption~\ref{ass:compact-row-sets}.

\emph{Lower bound.} Backward induction on \(t\). At \(t=T\), \(\lowerV_T=g_T=V_T^P\). Assume \(\lowerV_{t+1}\le V_{t+1}^P\). Each row \(p_{i,t}\) of \(P_t\) has non-negative entries, so for every state \(S_i\),
\[
\begin{aligned}
V_t^P(i)=g_t(i)+\sum_{j=1}^s p_{ij,t}V_{t+1}^P(j)
&\ge g_t(i)+\sum_{j=1}^s p_{ij,t}\lowerV_{t+1}(j)\\
&\ge g_t(i)+\inf_{p_{i,t}\in\Kset_{i,t}}\sum_{j=1}^s p_{ij,t}\lowerV_{t+1}(j)
=\lowerV_t(i),
\end{aligned}
\]
where the first inequality uses the induction hypothesis with \(p_{ij,t}\ge0\). Hence \(\lowerV_t\le V_t^P\) for all \(t\). Since \(m_0\) has non-negative entries, \(A(P_{0:T-1})=m_0V_0^P\ge m_0\lowerV_0\) for every admissible \(P\), and therefore \(m_0\lowerV_0\le\inf_{P_{0:T-1}\in\Aset}A(P_{0:T-1})\).

\emph{Attainment.} By Assumption~\ref{ass:compact-row-sets} the map \(p_{i,t}\mapsto\sum_{j=1}^s p_{ij,t}\lowerV_{t+1}(j)\) is linear, hence continuous, on the compact set \(\Kset_{i,t}\), so a minimizer \(p^\star_{i,t}\in\Kset_{i,t}\) exists. Because the rows are separately specified \eqref{eq:row-separable-set}, stacking \(p^\star_{1,t},\ldots,p^\star_{s,t}\) yields a matrix \(P^\star_t\in\Pset_t\); repeating this for every time step gives \(P^\star_{0:T-1}\in\Aset\). A second backward induction shows \(V_t^{P^\star}=\lowerV_t\): it holds at \(t=T\), and if \(V_{t+1}^{P^\star}=\lowerV_{t+1}\) then
\[
V_t^{P^\star}(i)=g_t(i)+\sum_{j=1}^s p^\star_{ij,t}\lowerV_{t+1}(j)=g_t(i)+(\lowerT_t\lowerV_{t+1})(i)=\lowerV_t(i).
\]
Therefore \(A(P^\star_{0:T-1})=m_0V_0^{P^\star}=m_0\lowerV_0\), the infimum is attained, and combined with the lower bound, \(m_0\lowerV_0=\inf_{P_{0:T-1}\in\Aset}A(P_{0:T-1})\). The minimizer \(P^\star\) is constructed without reference to \(m_0\), so the same sequence attains the lower value for every initial distribution.

\subsection{Proof of Corollary~\ref{cor:singleton}}

If \(\Pset_t=\{P_t\}\), each \(\Kset_{i,t}\) contains the single row \(P_{t,i\cdot}\). Therefore both the infimum and supremum in \eqref{eq:lower-upper-operators-ijar} equal \(\sum_{j=1}^s p_{ij,t}f(j)\), which is \((P_tf)(i)\). Substituting into \eqref{eq:imprecise-bellman-ijar} gives \eqref{eq:classical-bellman}.

\subsection{Proof of Proposition~\ref{prop:performance-difference}}

Let \(V_t^{\widehat P}\) be the backward vector under \(\widehat P\). For any admissible sequence \(Q\),
\[
A(Q)-A(\widehat P)=m_0(V_0^Q-V_0^{\widehat P}).
\]
Using the backward recursions,
\[
V_t^Q-V_t^{\widehat P}
=
Q_t(V_{t+1}^Q-V_{t+1}^{\widehat P})+(Q_t-\widehat P_t)V_{t+1}^{\widehat P}.
\]
Multiplying by \(m_t^Q\) and using \(m_{t+1}^Q=m_t^QQ_t\) yields
\[
m_t^Q(V_t^Q-V_t^{\widehat P})
=
m_{t+1}^Q(V_{t+1}^Q-V_{t+1}^{\widehat P})
+
m_t^Q(Q_t-\widehat P_t)V_{t+1}^{\widehat P}.
\]
Summing over \(t=0,\ldots,T-1\) telescopes. The terminal difference is zero because \(V_T^Q=V_T^{\widehat P}=g_T\). This gives \eqref{eq:performance-difference}.

\subsection{Proof of Proposition~\ref{prop:decision-robustness}}

Under Assumption~\ref{ass:compact-row-sets} the joint admissible set \(\Aset_{ab}\) is compact and the map \((P^a,P^b)\mapsto D(P^a,P^b;\lambda)\) is continuous, so by the extreme value theorem the infimum \(\underline D\) is attained at some admissible pair. If \(\underline D>0\), then \(D(P^a,P^b;\lambda)\ge\underline D>0\) for every admissible pair, so intervention \(a\) has strictly greater net monetary benefit throughout \(\Aset_{ab}\) and the recommendation is robust. If \(\underline D\le0\), a minimizing pair satisfies \(D\le0\), so the strict guarantee fails. If moreover \(\underline D<0\), that minimizing pair has \(D(P^a,P^b;\lambda)<0\) and hence favors intervention \(b\).

\bibliographystyle{elsarticle-num}
\bibliography{references}

@book{Augustin2014,
  title     = {Introduction to Imprecise Probabilities},
  editor    = {Thomas Augustin and Frank P. A. Coolen and Gert de Cooman and Matthias C. M. Troffaes},
  year      = {2014},
  publisher = {John Wiley \& Sons},
  address   = {Chichester, UK}
}

@book{Walley1991,
  title     = {Statistical Reasoning with Imprecise Probabilities},
  author    = {Walley, Peter},
  year      = {1991},
  publisher = {Chapman and Hall},
  address   = {London}
}

@article{BeckPauker1983,
  title   = {The {M}arkov Process in Medical Prognosis},
  author  = {Beck, J. Robert and Pauker, Stephen G.},
  journal = {Medical Decision Making},
  year    = {1983},
  volume  = {3},
  number  = {4},
  pages   = {419--458}
}

@article{SonnenbergBeck1993,
  title   = {{M}arkov Models in Medical Decision Making: A Practical Guide},
  author  = {Sonnenberg, Frank A. and Beck, J. Robert},
  journal = {Medical Decision Making},
  year    = {1993},
  volume  = {13},
  number  = {4},
  pages   = {322--338}
}

@article{Iskandar2018,
  title   = {A Theoretical Foundation for State-Transition Cohort Models in Health Decision Analysis},
  author  = {Iskandar, Rowan},
  journal = {PLOS ONE},
  year    = {2018},
  volume  = {13},
  number  = {12},
  pages   = {e0205543}
}

@incollection{HermansSkulj2014,
  title     = {Stochastic Processes},
  author    = {Hermans, Filip and {\v S}kulj, Damjan},
  booktitle = {Introduction to Imprecise Probabilities},
  editor    = {Augustin, Thomas and Coolen, Frank P. A. and de Cooman, Gert and Troffaes, Matthias C. M.},
  year      = {2014},
  publisher = {John Wiley \& Sons},
  pages     = {258--278}
}

@article{KrakDeBockSiebes2017,
  title   = {Imprecise Continuous-Time {M}arkov Chains},
  author  = {Krak, Thomas and De Bock, Jasper and Siebes, Arno},
  journal = {International Journal of Approximate Reasoning},
  year    = {2017},
  volume  = {88},
  pages   = {452--528}
}

@article{Mas2017CLOSE,
  title   = {Patent Foramen Ovale Closure or Anticoagulation vs. Antiplatelets after Stroke},
  author  = {Mas, Jean-Louis and Derumeaux, Genevi{\`e}ve and Guillon, Benoit and Massardier, Emmanuel and Hosseini, Hassan and Mechtouff, Laura and Arquizan, Caroline and B{\'e}jot, Yannick and Vuillier, Fran{\c c}ois and Detante, Olivier and Guidoux, C{\'e}cile and Canaple, Sophie and Vaduva, Cosmin and Dequatre-Ponchelle, Nadine and Sibon, Igor and Garnier, Philippe and Ferrier, Arnaud and Timsit, Serge and Robinet-Borgomano, Emmanuelle and Sablot, Denis and Lacour, Jean-Charles and Zuber, Marc and Favrole, Patricia and Pinel, Jean-Fran{\c c}ois and Apoil, Matthieu and Reiner, Philippe and Leftheriotis, Georges and Gu{\'e}rin, Philippe and Piot, Christophe and Rossi, Renaud and Dubois-Rand{\'e}, Jean-Luc and Eicher, Jean-Claude and Meneveau, Nicolas and Lusson, Jean-Ren{\'e} and Bertrand, Bruno and Schleich, Jean-Marc and Godart, Fran{\c c}ois and Thambo, Jean-Beno{\^i}t and Leborgne, Lawrence and Michel, Patrice and Pierard, Luc and Turc, Guillaume and Barthelet, Michel and Charles-Nelson, Aur{\'e}lie and Weimar, Christian and Moulin, Thierry and Juliard, Jean-Michel and Chatellier, Gilles},
  journal = {New England Journal of Medicine},
  year    = {2017},
  volume  = {377},
  pages   = {1011--1021}
}

@article{Lee2018DEFENSEPFO,
  title   = {Cryptogenic Stroke and High-Risk Patent Foramen Ovale: The {DEFENSE-PFO} Trial},
  author  = {Lee, Pil Hyung and Song, Jae-Kwan and Kim, Jong S. and Heo, Ji Hoe and Lee, Seung-Woon and Kim, Dong-Hyun and Song, Jae-Hwan and Kang, Dong-Wha and Kwon, Sang-Wook and Kang, Dae-Hyeok and Lee, Cheol Whan and Hong, Myeong-Ki and Kim, Jae-Joong and Park, Seung-Jung},
  journal = {Journal of the American College of Cardiology},
  year    = {2018},
  volume  = {71},
  pages   = {2335--2342}
}

@article{Leppert2018PFOCEA,
  title   = {Cost-Effectiveness of Patent Foramen Ovale Closure Versus Medical Therapy for Secondary Stroke Prevention},
  author  = {Leppert, Michelle H. and Poisson, Sharon N. and Carroll, John D. and Thaler, David E. and Kim, Catherine H. and Orjuela, Karen D. and Wiggins, Benjamin S. and Berman, Matt and Velez, Carlos A.},
  journal = {Stroke},
  year    = {2018},
  volume  = {49},
  pages   = {1443--1450}
}

@article{Skulj2015,
  title   = {Efficient Computation of the Bounds of Continuous Time Imprecise {M}arkov Chains},
  author  = {{\v S}kulj, Damjan},
  journal = {Applied Mathematics and Computation},
  year    = {2015},
  volume  = {250},
  pages   = {165--180}
}

@article{DeCoomanHermansQuaeghebeur2009,
  title   = {Imprecise {M}arkov Chains and Their Limit Behavior},
  author  = {de Cooman, Gert and Hermans, Filip and Quaeghebeur, Erik},
  journal = {Probability in the Engineering and Informational Sciences},
  year    = {2009},
  volume  = {23},
  number  = {4},
  pages   = {597--635}
}

@book{TroffaesDeCooman2014,
  title     = {Lower Previsions},
  author    = {Troffaes, Matthias C. M. and de Cooman, Gert},
  year      = {2014},
  publisher = {Wiley},
  address   = {Chichester}
}

@article{DeCoomanDeBockLopatatzidis2016,
  title   = {Imprecise Stochastic Processes in Discrete Time: Global Models, Imprecise {M}arkov Chains, and Ergodic Theorems},
  author  = {de Cooman, Gert and De Bock, Jasper and Lopatatzidis, Stavros},
  journal = {International Journal of Approximate Reasoning},
  year    = {2016},
  volume  = {76},
  pages   = {18--46}
}

@article{TJoensDeBock2021,
  title   = {Average Behaviour in Discrete-Time Imprecise {M}arkov Chains: A Study of Weak Ergodicity},
  author  = {T'Joens, Natan and De Bock, Jasper},
  journal = {International Journal of Approximate Reasoning},
  year    = {2021},
  volume  = {132},
  pages   = {181--205}
}

@article{Iyengar2005,
  title   = {Robust Dynamic Programming},
  author  = {Iyengar, Garud N.},
  journal = {Mathematics of Operations Research},
  year    = {2005},
  volume  = {30},
  number  = {2},
  pages   = {257--280}
}

@article{NilimElGhaoui2005,
  title   = {Robust Control of {M}arkov Decision Processes with Uncertain Transition Matrices},
  author  = {Nilim, Arnab and El Ghaoui, Laurent},
  journal = {Operations Research},
  year    = {2005},
  volume  = {53},
  number  = {5},
  pages   = {780--798}
}

@article{Wiesemann2013,
  title   = {Robust {M}arkov Decision Processes},
  author  = {Wiesemann, Wolfram and Kuhn, Daniel and Rustem, Ber{\c c}},
  journal = {Mathematics of Operations Research},
  year    = {2013},
  volume  = {38},
  number  = {1},
  pages   = {153--183}
}

\end{document}


\maketitle
\tableofcontents
\bigskip
This document consolidates the supplementary materials for the
manuscript. \textbf{Part I} is a step-by-step implementation guide for
the patent foramen ovale (PFO) case study. \textbf{Part II} lists the
full source of the accompanying R scripts. The individual runnable
source files (the \passthrough{\lstinline!.R!} scripts and the original
Quarto guide \passthrough{\lstinline!implementation\_guide.qmd!}) are
also provided in the submission package so that all results can be
regenerated.

\hypertarget{part-i.-implementation-guide}{%
\section{Part I. Implementation
guide}\label{part-i.-implementation-guide}}

\hypertarget{overview}{%
\subsection{Overview}\label{overview}}

This Quarto file implements the real-data case study from the
manuscript.

The clinical example is \textbf{patent foramen ovale (PFO) closure after
cryptogenic stroke}. The goal is to show how a classical Markov cohort
model changes when the transition probability matrix is not treated as
known exactly.

The example compares three analyses:

\begin{enumerate}
\def\labelenumi{\arabic{enumi}.}
\tightlist
\item
  \textbf{Empirical transition matrix}: convert observed
  recurrent-stroke counts directly into annual transition probabilities.
\item
  \textbf{Precise Dirichlet posterior mean}: use one precise posterior
  mean for the recurrent-stroke probability.
\item
  \textbf{Imprecise Dirichlet Model (IDM)}: use a set of admissible
  recurrent-stroke probabilities rather than one probability.
\end{enumerate}

The key decision question is whether PFO closure is robustly favored at
a willingness-to-pay threshold of \$150,000 per quality-adjusted
life-year (QALY).

\hypertarget{trial-evidence-for-recurrent-stroke}{%
\subsection{1. Trial evidence for recurrent
stroke}\label{trial-evidence-for-recurrent-stroke}}

The model uses two sources of high-risk PFO evidence:

\begin{itemize}
\tightlist
\item
  CLOSE: 0 recurrent strokes among 238 patients assigned to PFO closure,
  and 14 recurrent strokes among 235 patients assigned to antiplatelet
  therapy.
\item
  DEFENSE-PFO: 0 recurrent strokes among 60 patients assigned to PFO
  closure, and 10.5\% ischemic stroke in the medical-therapy arm. With
  60 patients, this is represented as 6 events.
\end{itemize}

For this simple reproducible example, patient-years are approximated as
patients multiplied by follow-up duration.

\begin{lstlisting}
trials <- data.frame(
  trial = c("CLOSE", "CLOSE", "DEFENSE-PFO", "DEFENSE-PFO"),
  strategy = c("PFO closure", "Medical therapy", "PFO closure", "Medical therapy"),
  patients = c(238, 235, 60, 60),
  followup_time_steps = c(5.3, 5.3, 2.8, 2.8),
  recurrent_strokes = c(0, 14, 0, 6)
)

trials$patient_years <- trials$patients * trials$followup_time_steps
trials
\end{lstlisting}

Pool the transition counts by strategy.

\begin{lstlisting}
closure_events <- sum(trials$recurrent_strokes[trials$strategy == "PFO closure"])
medical_events <- sum(trials$recurrent_strokes[trials$strategy == "Medical therapy"])

closure_py <- sum(trials$patient_years[trials$strategy == "PFO closure"])
medical_py <- sum(trials$patient_years[trials$strategy == "Medical therapy"])

evidence <- data.frame(
  strategy = c("PFO closure", "Medical therapy"),
  events = c(closure_events, medical_events),
  patient_years = c(closure_py, medical_py)
)

evidence
\end{lstlisting}

\hypertarget{empirical-and-idm-transition-probabilities}{%
\subsection{2. Empirical and IDM transition
probabilities}\label{empirical-and-idm-transition-probabilities}}

The empirical annual recurrent-stroke probability is

\[
\widehat p = \frac{\text{events}}{\text{patient-years}}.
\]

For the IDM, use strength \(\alpha=2\). For a binary row, the lower and
upper bounds for the recurrent-stroke probability are

\[
\underline p = \frac{x}{N+\alpha},
\qquad
\overline p = \frac{x+\alpha}{N+\alpha},
\]

where \(x\) is the recurrent-stroke count and \(N\) is the approximate
patient-years at risk.

A precise symmetric Dirichlet posterior mean is also shown as a
contrast:

\[
p^{\mathrm{Dir}} = \frac{x+\alpha/2}{N+\alpha}.
\]

\begin{lstlisting}
alpha <- 2

compute_probabilities <- function(events, exposure, alpha = 2) {
  c(
    empirical = events / exposure,
    dirichlet_mean = (events + alpha / 2) / (exposure + alpha),
    idm_lower = events / (exposure + alpha),
    idm_upper = (events + alpha) / (exposure + alpha)
  )
}

prob_closure <- compute_probabilities(closure_events, closure_py, alpha)
prob_medical <- compute_probabilities(medical_events, medical_py, alpha)

transition_probabilities <- rbind(
  "PFO closure" = prob_closure,
  "Medical therapy" = prob_medical
)

round(transition_probabilities, 6)
\end{lstlisting}

The important point is that the empirical closure-arm probability is
zero, but the IDM upper probability is not zero. The observed zero is
evidence, not proof that recurrent stroke is impossible.

\hypertarget{states-and-model-inputs}{%
\subsection{3. States and model inputs}\label{states-and-model-inputs}}

Use three states:

\begin{enumerate}
\def\labelenumi{\arabic{enumi}.}
\tightlist
\item
  No recurrent stroke.
\item
  After recurrent stroke.
\item
  Dead.
\end{enumerate}

Everyone begins in the no-recurrent-stroke state.

\begin{lstlisting}
states <- c("No recurrent stroke", "After recurrent stroke", "Dead")

m0 <- c("No recurrent stroke" = 1,
        "After recurrent stroke" = 0,
        "Dead" = 0)

horizon <- 7
discount_rate <- 0.03
lambda <- 150000

# Background mortality assumptions
p_death_no_stroke <- 0.002
p_death_post_stroke <- 0.050

# State values
utility_no_stroke <- 0.86
utility_post_stroke <- 0.65

# Costs
cost_closure_once <- 15000
cost_no_stroke <- 200
cost_acute_stroke <- 40000
cost_post_stroke <- 8000
\end{lstlisting}

\hypertarget{build-a-transition-probability-matrix}{%
\subsection{4. Build a transition probability
matrix}\label{build-a-transition-probability-matrix}}

For a given recurrent-stroke probability \(p\), the transition matrix is

\[
P =
\begin{pmatrix}
1-p-d_0 & p & d_0\\
0 & 1-d_1 & d_1\\
0 & 0 & 1
\end{pmatrix},
\]

where \(d_0\) is the annual background death probability before
recurrent stroke and \(d_1\) is the death probability after recurrent
stroke.

\begin{lstlisting}
make_transition_matrix <- function(p_stroke,
                                   p_death_no_stroke = 0.002,
                                   p_death_post_stroke = 0.050) {
  P <- matrix(
    c(
      1 - p_stroke - p_death_no_stroke, p_stroke, p_death_no_stroke,
      0, 1 - p_death_post_stroke, p_death_post_stroke,
      0, 0, 1
    ),
    nrow = 3,
    byrow = TRUE,
    dimnames = list(states, states)
  )

  # Basic checks
  stopifnot(all(P >= -1e-12))
  stopifnot(max(abs(rowSums(P) - 1)) < 1e-10)

  P
}

make_transition_matrix(prob_medical["empirical"])
\end{lstlisting}

\hypertarget{run-the-markov-cohort-model}{%
\subsection{5. Run the Markov cohort
model}\label{run-the-markov-cohort-model}}

The cohort state vector is updated as

\[
m_{t+1}=m_tP.
\]

Time-step outcome contributions are calculated from the current cohort
distribution. We discount both costs and QALYs at 3\% in this compact
example.

\begin{lstlisting}
evaluate_strategy <- function(p_stroke,
                              closure = FALSE,
                              horizon = 7,
                              lambda = 150000,
                              discount_rate = 0.03) {
  P <- make_transition_matrix(
    p_stroke,
    p_death_no_stroke = p_death_no_stroke,
    p_death_post_stroke = p_death_post_stroke
  )

  m <- m0
  total_qalys <- 0
  total_costs <- if (closure) cost_closure_once else 0

  trajectory <- data.frame(
    time_step = 0,
    no_recurrent_stroke = m[1],
    after_recurrent_stroke = m[2],
    dead = m[3],
    time_step_qalys = NA_real_,
    time_step_costs = NA_real_
  )

  for (t in 0:(horizon - 1)) {
    discount <- 1 / (1 + discount_rate)^t

    time_step_qalys <- (
      m["No recurrent stroke"] * utility_no_stroke +
      m["After recurrent stroke"] * utility_post_stroke
    ) * discount

    time_step_costs <- (
      m["No recurrent stroke"] * cost_no_stroke +
      m["After recurrent stroke"] * cost_post_stroke
    ) * discount

    # Acute stroke costs are counted for the fraction of the cohort that
    # transitions from no recurrent stroke to recurrent stroke during the time step.
    new_strokes <- m["No recurrent stroke"] * p_stroke
    acute_costs <- new_strokes * cost_acute_stroke * discount

    total_qalys <- total_qalys + time_step_qalys
    total_costs <- total_costs + time_step_costs + acute_costs

    m <- as.numeric(m %*% P)
    names(m) <- states

    trajectory <- rbind(
      trajectory,
      data.frame(
        time_step = t + 1,
        no_recurrent_stroke = m[1],
        after_recurrent_stroke = m[2],
        dead = m[3],
        time_step_qalys = time_step_qalys,
        time_step_costs = time_step_costs + acute_costs
      )
    )
  }

  list(
    qalys = total_qalys,
    costs = total_costs,
    nmb = lambda * total_qalys - total_costs,
    trajectory = trajectory
  )
}
\end{lstlisting}

\hypertarget{compare-strategies}{%
\subsection{6. Compare strategies}\label{compare-strategies}}

The incremental net monetary benefit (INMB) for PFO closure versus
medical therapy is

\[
\mathrm{INMB}
=
\lambda(Q_{\mathrm{closure}}-Q_{\mathrm{medical}})
-
(C_{\mathrm{closure}}-C_{\mathrm{medical}}).
\]

\begin{lstlisting}
compare_strategies <- function(p_closure, p_medical, label) {
  closure <- evaluate_strategy(p_closure, closure = TRUE)
  medical <- evaluate_strategy(p_medical, closure = FALSE)

  inc_qalys <- closure$qalys - medical$qalys
  inc_costs <- closure$costs - medical$costs
  inmb <- closure$nmb - medical$nmb

  data.frame(
    analysis = label,
    incremental_qalys = inc_qalys,
    incremental_costs = inc_costs,
    INMB = inmb
  )
}
\end{lstlisting}

\hypertarget{empirical-transition-matrix}{%
\subsubsection{Empirical transition
matrix}\label{empirical-transition-matrix}}

\begin{lstlisting}
empirical_result <- compare_strategies(
  p_closure = prob_closure["empirical"],
  p_medical = prob_medical["empirical"],
  label = "Empirical transition matrix"
)

empirical_result
\end{lstlisting}

\hypertarget{precise-dirichlet-posterior-mean}{%
\subsubsection{Precise Dirichlet posterior
mean}\label{precise-dirichlet-posterior-mean}}

\begin{lstlisting}
dirichlet_result <- compare_strategies(
  p_closure = prob_closure["dirichlet_mean"],
  p_medical = prob_medical["dirichlet_mean"],
  label = "Precise Dirichlet posterior mean"
)

dirichlet_result
\end{lstlisting}

\hypertarget{imprecise-analysis}{%
\subsection{7. Imprecise analysis}\label{imprecise-analysis}}

To get the \textbf{lower} INMB for PFO closure, use the least favorable
admissible comparison:

\begin{itemize}
\tightlist
\item
  closure recurrent-stroke probability at its IDM upper bound;
\item
  medical-therapy recurrent-stroke probability at its IDM lower bound.
\end{itemize}

To get the \textbf{upper} INMB, reverse those choices:

\begin{itemize}
\tightlist
\item
  closure recurrent-stroke probability at its IDM lower bound;
\item
  medical-therapy recurrent-stroke probability at its IDM upper bound.
\end{itemize}

\begin{lstlisting}
idm_lower <- compare_strategies(
  p_closure = prob_closure["idm_upper"],
  p_medical = prob_medical["idm_lower"],
  label = "IDM lower bound"
)

idm_upper <- compare_strategies(
  p_closure = prob_closure["idm_lower"],
  p_medical = prob_medical["idm_upper"],
  label = "IDM upper bound"
)

results <- rbind(
  empirical_result,
  dirichlet_result,
  idm_lower,
  idm_upper
)

results$incremental_costs <- round(results$incremental_costs, 0)
results$incremental_qalys <- round(results$incremental_qalys, 4)
results$INMB <- round(results$INMB, 0)

results
\end{lstlisting}

\hypertarget{the-general-method-lower-and-upper-transition-operator-recursion}{%
\subsection{8. The general method: lower and upper transition-operator
recursion}\label{the-general-method-lower-and-upper-transition-operator-recursion}}

Section 7 obtained the IDM interval by evaluating the classical cohort
model at the interval endpoints. That shortcut is valid here only
because incremental net monetary benefit is monotone in the single
recurrent-stroke probability, so the worst and best admissible rows are
the same at every time step (Remark 1 in the manuscript). The general
method in Theorem 1 does not rely on that. It applies the \textbf{lower
and upper transition operators} at every step,

\[
(\underline{\mathcal T}_t V)(i)=\inf_{p_i\in\mathcal K_{i,t}}\sum_j p_{ij}\bigl(r_t(i,j)+V(j)\bigr),
\qquad
(\overline{\mathcal T}_t V)(i)=\sup_{p_i\in\mathcal K_{i,t}}\sum_j p_{ij}\bigl(r_t(i,j)+V(j)\bigr),
\]

and runs the backward recursion
\(V_t=g_t+\underline{\mathcal T}_t V_{t+1}\) for the lower envelope (or
\(\overline{\mathcal T}_t\) for the upper). The transition reward
\(r_t(i,j)\) carries outcomes incurred on a transition, such as the
acute recurrent-stroke cost. We implement this directly and confirm it
reproduces the Section 7 interval.

\hypertarget{one-step-transition-operator-over-a-row-set}{%
\subsubsection{8.1 One-step transition operator over a row
set}\label{one-step-transition-operator-over-a-row-set}}

For a row set
\(\mathcal K=\{p:\ \underline p\le p\le\overline p,\ \sum_j p_j=1\}\)
and a target vector \(\mathrm{target}_j=r(i,j)+V(j)\), the lower (upper)
one-step value is the minimum (maximum) of a linear function over a
box-truncated simplex. It is found greedily: start at the lower bounds
and allocate the remaining mass \(1-\sum_j\underline p_j\) to the
smallest-target coordinates for the lower operator, or the largest for
the upper.

\begin{lstlisting}
row_operator <- function(target, p_lo, p_hi, sense = c("lower", "upper")) {
  sense <- match.arg(sense)
  stopifnot(sum(p_lo) <= 1 + 1e-9, sum(p_hi) >= 1 - 1e-9)
  free <- 1 - sum(p_lo)
  ord <- order(target, decreasing = (sense == "upper"))
  p <- p_lo
  for (j in ord) {
    take <- min(free, p_hi[j] - p_lo[j])
    p[j] <- p[j] + take
    free <- free - take
  }
  sum(p * target)
}
\end{lstlisting}

For a full Imprecise Dirichlet Model row this has the closed form
\(\underline{\mathcal T}f=\bigl(\sum_j n_{ij}f(j)+\alpha\min_j f(j)\bigr)/(N_i+\alpha)\),
and the maximum for the upper operator; the greedy version above also
handles partially constrained rows, such as the one-parameter
recurrent-stroke row used here.

\hypertarget{backward-recursion-in-transition-reward-form}{%
\subsubsection{8.2 Backward recursion in transition-reward
form}\label{backward-recursion-in-transition-reward-form}}

\begin{lstlisting}
arm_envelope <- function(p_lo, p_hi, sense, closure) {
  # State order: 1 = no recurrent stroke, 2 = after recurrent stroke, 3 = dead.
  # Only the no-stroke row is imprecise (stroke probability in [p_lo, p_hi]);
  # background death is held fixed. The acute stroke cost is a transition reward.
  p_lo <- unname(p_lo); p_hi <- unname(p_hi)
  V <- c(0, 0, 0)                                   # terminal value V_T = 0
  for (t in (horizon - 1):0) {
    discount <- 1 / (1 + discount_rate)^t
    g <- discount * (lambda * c(utility_no_stroke, utility_post_stroke, 0) -
                     c(cost_no_stroke, cost_post_stroke, 0))
    rows_lo <- list(c(1 - p_hi - p_death_no_stroke, p_lo, p_death_no_stroke),
                    c(0, 1 - p_death_post_stroke, p_death_post_stroke),
                    c(0, 0, 1))
    rows_hi <- list(c(1 - p_lo - p_death_no_stroke, p_hi, p_death_no_stroke),
                    c(0, 1 - p_death_post_stroke, p_death_post_stroke),
                    c(0, 0, 1))
    reward <- matrix(0, 3, 3)
    reward[1, 2] <- -discount * cost_acute_stroke   # acute cost: no-stroke -> post-stroke
    Vnew <- numeric(3)
    for (i in 1:3) {
      target <- reward[i, ] + V                     # r_t(i,.) + V_{t+1}
      Vnew[i] <- g[i] + row_operator(target, rows_lo[[i]], rows_hi[[i]], sense)
    }
    V <- Vnew
  }
  V[1] - if (closure) cost_closure_once else 0       # m0 = e_1; subtract one-time closure cost
}
\end{lstlisting}

\hypertarget{lower-and-upper-inmb-from-the-operators}{%
\subsubsection{8.3 Lower and upper INMB from the
operators}\label{lower-and-upper-inmb-from-the-operators}}

Because the two arms are informed by separate evidence, the admissible
set is a product and the INMB endpoints factorize: the lower INMB pairs
the worst closure arm with the best medical arm, and the upper INMB does
the reverse.

\begin{lstlisting}
pc_lo <- prob_closure["idm_lower"]; pc_hi <- prob_closure["idm_upper"]
pm_lo <- prob_medical["idm_lower"]; pm_hi <- prob_medical["idm_upper"]

lower_INMB_operator <- arm_envelope(pc_lo, pc_hi, "lower", TRUE) -
                       arm_envelope(pm_lo, pm_hi, "upper", FALSE)
upper_INMB_operator <- arm_envelope(pc_lo, pc_hi, "upper", TRUE) -
                       arm_envelope(pm_lo, pm_hi, "lower", FALSE)

data.frame(
  method = c("Endpoint evaluation (Section 7)",
             "Transition-operator recursion (Theorem 1)"),
  lower_INMB = round(c(idm_lower$INMB, lower_INMB_operator)),
  upper_INMB = round(c(idm_upper$INMB, upper_INMB_operator))
)
\end{lstlisting}

The two methods agree. The endpoint evaluation is the special case in
which the optimizing row is the same at every time step; the
transition-operator recursion is the method to use when that is not
guaranteed---when outcome contributions or transition rewards vary
across time in a way that flips the optimal row, when more than one row
of the matrix is imprecise, or when the outcome is not monotone in the
uncertain parameters. To adapt it to another model, replace the state
rewards \passthrough{\lstinline!g!}, the transition rewards
\passthrough{\lstinline!reward!}, and the row-bound lists
\passthrough{\lstinline!rows\_lo!}/\passthrough{\lstinline!rows\_hi!};
the operator and the recursion are unchanged.

\hypertarget{decision-interpretation}{%
\subsection{9. Decision interpretation}\label{decision-interpretation}}

The empirical transition matrix gives a slightly positive INMB for PFO
closure. A classical point analysis would therefore favor closure at
this threshold.

The IDM interval crosses zero, so the imprecise analysis does
\textbf{not} support a robust recommendation at this threshold.

\begin{lstlisting}
emp_inmb <- empirical_result$INMB
lower_inmb <- idm_lower$INMB
upper_inmb <- idm_upper$INMB

cat("Empirical INMB:", round(emp_inmb, 0), "\n")
cat("IDM INMB interval: [", round(lower_inmb, 0), ", ", round(upper_inmb, 0), "]\n", sep = "")

if (emp_inmb > 0 && lower_inmb < 0 && upper_inmb > 0) {
  cat("Conclusion: the point model favors PFO closure, but the imprecise model shows non-robustness.\n")
}
\end{lstlisting}

A small plot helps visualize the result.

\begin{lstlisting}
plot(
  x = c(lower_inmb, upper_inmb),
  y = c(1, 1),
  type = "l",
  lwd = 5,
  xlab = "Incremental net monetary benefit for PFO closure",
  ylab = "",
  yaxt = "n",
  ylim = c(0.7, 1.3),
  main = "Empirical point estimate versus IDM interval"
)
axis(2, at = 1, labels = "IDM interval", las = 1)
abline(v = 0, lty = 2)
points(emp_inmb, 1, pch = 19)
text(emp_inmb, 1.1, labels = paste0("empirical = $", round(emp_inmb, 0)))
text(lower_inmb, 0.9, labels = paste0("$", round(lower_inmb, 0)), pos = 2)
text(upper_inmb, 0.9, labels = paste0("$", round(upper_inmb, 0)), pos = 4)
\end{lstlisting}

\hypertarget{what-this-example-shows}{%
\subsection{10. What this example shows}\label{what-this-example-shows}}

The empirical model treats the observed zero recurrent-stroke count in
the PFO closure arms as a zero recurrent-stroke probability. The IDM
model treats the observed zero as evidence for a small probability,
while still allowing non-zero values compatible with the trial evidence.

That difference changes the decision interpretation near the threshold:

\begin{itemize}
\tightlist
\item
  The empirical transition matrix says: \textbf{favor PFO closure}.
\item
  The imprecise transition model says: \textbf{the recommendation is not
  robust}.
\end{itemize}

This is the main practical reason to use an imprecise Markov cohort
model when transition evidence is sparse or only partially resolves the
transition probability matrix.

\hypertarget{part-ii.-r-source-scripts}{%
\section{Part II. R source scripts}\label{part-ii.-r-source-scripts}}

The following scripts reproduce every numerical result in the
manuscript. Each is self-contained and can be run with a base R
installation.

\hypertarget{bellman_envelope.r}{%
\subsection{\texorpdfstring{\texttt{bellman\_envelope.R}}{bellman\_envelope.R}}\label{bellman_envelope.r}}

Independent verification of the case-study interval via the lower and
upper transition-operator recursion (Theorem 1, transition-reward form).

\begin{lstlisting}
# bellman_envelope.R
# Direct implementation of Theorem 1 (lower/upper transition-operator Bellman
# recursion) for the PFO imprecise Markov cohort model of the case study.
#
# Unlike case_study_reproducible.R, which evaluates the classical cohort model at
# the IDM interval endpoints, this script computes the lower and upper envelopes
# from the transition operators in Eq. (imprecise-bellman-transition), folding in
# the acute recurrent-stroke cost as a one-step transition reward r(1,2).
# It recovers the same INMB interval, illustrating Remark 1: because INMB is
# monotone in the single recurrent-stroke probability, the optimizing row is
# pinned to an interval endpoint at every step, so the rectangular envelope and
# the fixed-matrix endpoint evaluation coincide here.
#
# States: 1 = no recurrent stroke, 2 = after recurrent stroke, 3 = dead.

lambda         <- 150000
discount_rate  <- 0.03
horizon        <- 7
u              <- c(0.86, 0.65, 0.00)   # utilities by state
c_state        <- c(200, 8000, 0.00)    # annual state-occupancy costs
d0             <- 0.002                 # death probability, no-stroke state
d1             <- 0.050                 # death probability, post-stroke state
c_acute        <- 40000                 # acute transition cost: no-stroke -> post-stroke
c_closure_once <- 15000                 # one-time closure cost

# One backward step of the lower/upper transition-operator recursion for this model.
# Only the no-stroke row (state 1) is imprecise: p_stroke in [p_lo, p_hi], with the
# background death probability d0 held fixed (the one-parameter subfamily of the IDM
# simplex used in the case study). Rows 2 and 3 are precise. The acute cost enters as the
# transition reward r(1,2) = -dt * c_acute in net-monetary-benefit units.
step_value <- function(Vnext, p_lo, p_hi, dt, sense) {
  Vt <- numeric(3)
  Vt[2] <- (1 - d1) * Vnext[2] + d1 * Vnext[3]   # post-stroke row [0, 1-d1, d1]
  Vt[3] <- Vnext[3]                              # dead row (absorbing)
  # no-stroke row value as a function of p:
  #   (1 - d0) Vnext[1] + d0 Vnext[3] + p * ( Vnext[2] - Vnext[1] - dt * c_acute )
  coef <- Vnext[2] - Vnext[1] - dt * c_acute
  p <- if (sense == "lower") {
    if (coef < 0) p_hi else p_lo                 # minimise the arm's NMB value-to-go
  } else {
    if (coef < 0) p_lo else p_hi                 # maximise
  }
  Vt[1] <- (1 - d0) * Vnext[1] + d0 * Vnext[3] + p * coef
  Vt
}

# Backward recursion over reward steps t = horizon-1, ..., 0 with terminal V = 0.
arm_envelope <- function(p_lo, p_hi, sense, closure) {
  V <- c(0, 0, 0)
  for (t in (horizon - 1):0) {
    dt <- 1 / (1 + discount_rate)^t
    g  <- dt * (lambda * u - c_state)            # state NMB contribution g_t
    V  <- g + step_value(V, p_lo, p_hi, dt, sense)
  }
  V[1] - if (closure) c_closure_once else 0      # m0 = e_1; subtract one-time closure cost
}

# IDM bounds (alpha = 2) from the pooled high-risk PFO evidence.
alpha          <- 2
closure_py     <- 238 * 5.3 + 60 * 2.8
medical_py     <- 235 * 5.3 + 60 * 2.8
closure_events <- 0
medical_events <- 20
pc_lo <- closure_events / (closure_py + alpha)
pc_hi <- (closure_events + alpha) / (closure_py + alpha)
pm_lo <- medical_events / (medical_py + alpha)
pm_hi <- (medical_events + alpha) / (medical_py + alpha)

# Independent per-arm admissible sets, so the lower endpoint of INMB factorises:
#   lower INMB = inf NMB_closure - sup NMB_medical   (closure worst, medical best)
#   upper INMB = sup NMB_closure - inf NMB_medical   (closure best, medical worst)
lower_INMB <- arm_envelope(pc_lo, pc_hi, "lower", TRUE) -
              arm_envelope(pm_lo, pm_hi, "upper", FALSE)
upper_INMB <- arm_envelope(pc_lo, pc_hi, "upper", TRUE) -
              arm_envelope(pm_lo, pm_hi, "lower", FALSE)

cat(sprintf("Lower INMB (Theorem 1 recursion): %.1f\n", lower_INMB))
cat(sprintf("Upper INMB (Theorem 1 recursion): %.1f\n", upper_INMB))
cat(sprintf("IDM INMB interval: [%.0f, %.0f]\n", round(lower_INMB), round(upper_INMB)))
stopifnot(lower_INMB < 0, upper_INMB > 0)   # interval crosses zero: non-robust
\end{lstlisting}

\hypertarget{case_study_reproducible.r}{%
\subsection{\texorpdfstring{\texttt{case\_study\_reproducible.R}}{case\_study\_reproducible.R}}\label{case_study_reproducible.r}}

Reproduces the PFO case study: evidence synthesis, IDM bounds, the
cohort model, the results table, and the threshold/strength sensitivity
analyses.

\begin{lstlisting}
# Real-data case study: PFO closure after cryptogenic stroke
# This script reproduces the PFO case study (Real-data application) in the manuscript.
# It uses a compact Markov cohort model to isolate transition-row uncertainty.

# -------------------------
# Evidence synthesis
# -------------------------
# High-risk PFO recurrent-stroke evidence used in the manuscript:
# CLOSE: 0/238 closure vs 14/235 antiplatelet, mean/median follow-up approximated as 5.3 time steps.
# DEFENSE-PFO: 0/60 closure vs 10.5% medical therapy, represented as 6/60, follow-up 2.8 time steps.
# Patient-years are approximated as randomized patients multiplied by follow-up.

trials <- data.frame(
  trial = c("CLOSE", "CLOSE", "DEFENSE-PFO", "DEFENSE-PFO"),
  strategy = c("PFO closure", "Medical therapy", "PFO closure", "Medical therapy"),
  patients = c(238, 235, 60, 60),
  followup = c(5.3, 5.3, 2.8, 2.8),
  stroke_events = c(0, 14, 0, 6)
)

trials$patient_years <- trials$patients * trials$followup
print(trials)

closure_py <- sum(trials$patient_years[trials$strategy == "PFO closure"])
medical_py <- sum(trials$patient_years[trials$strategy == "Medical therapy"])
closure_events <- sum(trials$stroke_events[trials$strategy == "PFO closure"])
medical_events <- sum(trials$stroke_events[trials$strategy == "Medical therapy"])

alpha <- 2

# Empirical annual recurrent-stroke probabilities.
p_closure_emp <- closure_events / closure_py
p_medical_emp <- medical_events / medical_py

# Precise symmetric Dirichlet posterior mean over the binary row: stroke / no stroke.
p_closure_dir <- (closure_events + alpha / 2) / (closure_py + alpha)
p_medical_dir <- (medical_events + alpha / 2) / (medical_py + alpha)

# IDM bounds for the recurrent-stroke component.
p_closure_lower <- closure_events / (closure_py + alpha)
p_closure_upper <- (closure_events + alpha) / (closure_py + alpha)
p_medical_lower <- medical_events / (medical_py + alpha)
p_medical_upper <- (medical_events + alpha) / (medical_py + alpha)

cat("\nAnnual recurrent-stroke probabilities\n")
print(data.frame(
  strategy = c("PFO closure", "Medical therapy"),
  empirical = c(p_closure_emp, p_medical_emp),
  dirichlet_mean = c(p_closure_dir, p_medical_dir),
  IDM_lower = c(p_closure_lower, p_medical_lower),
  IDM_upper = c(p_closure_upper, p_medical_upper)
))

# -------------------------
# Markov cohort model
# -------------------------
# States: no recurrent stroke, after recurrent stroke, dead.

evaluate_strategy <- function(p_stroke,
                              closure = FALSE,
                              horizon = 7,
                              lambda = 150000,
                              discount_rate = 0.03,
                              p_death_no_stroke = 0.002,
                              p_death_post_stroke = 0.050,
                              utility_no_stroke = 0.86,
                              utility_post_stroke = 0.65,
                              cost_no_stroke = 200,
                              cost_post_stroke = 8000,
                              cost_acute_stroke = 40000,
                              cost_closure_once = 15000) {
  m <- c(no_stroke = 1, post_stroke = 0, dead = 0)
  qalys <- 0
  costs <- if (closure) cost_closure_once else 0

  for (t in 0:(horizon - 1)) {
    d <- 1 / (1 + discount_rate)^t

    # Time-step outcome contributions.
    qalys <- qalys + d * (m["no_stroke"] * utility_no_stroke +
                          m["post_stroke"] * utility_post_stroke)
    costs <- costs + d * (m["no_stroke"] * cost_no_stroke +
                          m["post_stroke"] * cost_post_stroke)

    # Acute recurrent stroke cost for transitions from no recurrent stroke to post-stroke.
    strokes <- m["no_stroke"] * p_stroke
    costs <- costs + d * strokes * cost_acute_stroke

    P <- matrix(
      c(1 - p_stroke - p_death_no_stroke, p_stroke, p_death_no_stroke,
        0, 1 - p_death_post_stroke, p_death_post_stroke,
        0, 0, 1),
      nrow = 3,
      byrow = TRUE,
      dimnames = list(names(m), names(m))
    )

    m <- as.numeric(m %*% P)
    names(m) <- c("no_stroke", "post_stroke", "dead")
  }

  nmb <- lambda * qalys - costs
  list(qalys = qalys, costs = costs, nmb = nmb)
}

compare_strategies <- function(p_closure, p_medical) {
  closure <- evaluate_strategy(p_closure, closure = TRUE)
  medical <- evaluate_strategy(p_medical, closure = FALSE)
  c(
    closure_qalys = closure$qalys,
    closure_costs = closure$costs,
    closure_nmb = closure$nmb,
    medical_qalys = medical$qalys,
    medical_costs = medical$costs,
    medical_nmb = medical$nmb,
    incremental_qalys = closure$qalys - medical$qalys,
    incremental_costs = closure$costs - medical$costs,
    INMB = closure$nmb - medical$nmb
  )
}

results <- rbind(
  empirical_matrix = compare_strategies(p_closure_emp, p_medical_emp),
  precise_dirichlet_mean = compare_strategies(p_closure_dir, p_medical_dir),
  IDM_lower_bound = compare_strategies(p_closure_upper, p_medical_lower),
  IDM_upper_bound = compare_strategies(p_closure_lower, p_medical_upper)
)

cat("\nCost-effectiveness results\n")
print(round(results, 3))

# -------------------------
# Sensitivity analyses
# (Section: Sensitivity to the threshold and the imprecision strength)
# -------------------------

# INMB at a given IDM strength alpha and threshold lambda, as the exact lower/upper
# envelope: closure worst (highest stroke prob) and medical best (lowest) give the
# lower endpoint; the reverse gives the upper endpoint.
inmb_interval <- function(alpha, lambda = 150000) {
  pc_lo <- closure_events / (closure_py + alpha)
  pc_hi <- (closure_events + alpha) / (closure_py + alpha)
  pm_lo <- medical_events / (medical_py + alpha)
  pm_hi <- (medical_events + alpha) / (medical_py + alpha)
  lower <- evaluate_strategy(pc_hi, closure = TRUE,  lambda = lambda)$nmb -
           evaluate_strategy(pm_lo, closure = FALSE, lambda = lambda)$nmb
  upper <- evaluate_strategy(pc_lo, closure = TRUE,  lambda = lambda)$nmb -
           evaluate_strategy(pm_hi, closure = FALSE, lambda = lambda)$nmb
  c(lower = lower, upper = upper)
}

cat("\nINMB interval at lambda = 150000 by IDM strength alpha\n")
for (a in c(1, 2)) {
  iv <- inmb_interval(a, 150000)
  cat(sprintf("  alpha = %d: [%.0f, %.0f]\n", a, iv["lower"], iv["upper"]))
}

cat("\nINMB interval by willingness-to-pay threshold (alpha = 2)\n")
for (lam in c(50000, 100000, 150000, 200000)) {
  iv <- inmb_interval(2, lam)
  concl <- if (iv["lower"] > 0) "PFO closure" else
           if (iv["upper"] < 0) "Medical therapy" else "None (non-robust)"
  cat(sprintf("  lambda = %6d: [%.0f, %.0f]  -> %s\n", lam, iv["lower"], iv["upper"], concl))
}

# Threshold at which each curve crosses zero (alpha = 2):
# INMB(lambda) = lambda * dQALY - dCost, so the crossing is at lambda = dCost / dQALY.
zero_crossing <- function(p_closure, p_medical) {
  cl <- evaluate_strategy(p_closure, closure = TRUE)
  me <- evaluate_strategy(p_medical, closure = FALSE)
  (cl$costs - me$costs) / (cl$qalys - me$qalys)
}
pc_lo2 <- closure_events / (closure_py + 2)
pc_hi2 <- (closure_events + 2) / (closure_py + 2)
pm_lo2 <- medical_events / (medical_py + 2)
pm_hi2 <- (medical_events + 2) / (medical_py + 2)
cat(sprintf("\nZero-crossing thresholds (alpha = 2):\n  empirical:      %.0f\n  lower envelope: %.0f\n  upper envelope: %.0f\n",
            zero_crossing(p_closure_emp, p_medical_emp),
            zero_crossing(pc_hi2, pm_lo2),
            zero_crossing(pc_lo2, pm_hi2)))
\end{lstlisting}

\hypertarget{classical_bellman_three_state.r}{%
\subsection{\texorpdfstring{\texttt{classical\_bellman\_three\_state.R}}{classical\_bellman\_three\_state.R}}\label{classical_bellman_three_state.r}}

Minimal classical (precise) three-state backward Bellman recursion,
illustrating the singleton reduction (Corollary 1).

\begin{lstlisting}
# Classical three-state Markov cohort model using Bellman recursion
# This script demonstrates the classical backward Bellman recursion of the manuscript.
# States: S1 = healthy, S2 = sick, S3 = dead.

P <- matrix(
  c(0.80, 0.15, 0.05,
    0.00, 0.70, 0.30,
    0.00, 0.00, 1.00),
  nrow = 3,
  byrow = TRUE
)

m0 <- c(1, 0, 0)       # whole cohort starts in S1
T <- 3                 # final model year
g <- c(1, 1, 0)        # survival-year value vector

# Backward Bellman recursion.
# V[[t + 1]] stores V_t because R list indices start at 1.
V <- vector("list", T + 1)
V[[T + 1]] <- g

for (t in (T - 1):0) {
  V[[t + 1]] <- g + as.numeric(P %*% V[[t + 2]])
}

cat("Bellman value vectors:\n")
for (t in 0:T) {
  cat("V_", t, " = ", paste(round(V[[t + 1]], 4), collapse = ", "), "\n", sep = "")
}

cat("\nExpected accumulated value from m0:", sum(m0 * V[[1]]), "\n")

# Forward cohort recursion check.
m <- matrix(NA_real_, nrow = T + 1, ncol = 3)
m[1, ] <- m0

for (t in 0:(T - 1)) {
  m[t + 2, ] <- m[t + 1, ] %*% P
}

year_values <- as.numeric(m %*% g)

cat("\nForward cohort states:\n")
print(round(m, 4))
cat("Year values:", paste(round(year_values, 4), collapse = ", "), "\n")
cat("Forward accumulated value:", sum(year_values), "\n")
\end{lstlisting}

\hypertarget{simulation_sparsity.r}{%
\subsection{\texorpdfstring{\texttt{simulation\_sparsity.R}}{simulation\_sparsity.R}}\label{simulation_sparsity.r}}

Sparsity experiment: decision error and IDM abstention rates versus
per-arm patient-years (Figure 2).

\begin{lstlisting}
# simulation_sparsity.R
# Decision outcomes (correct / wrong / abstain) for the empirical, precise
# Dirichlet, and IDM analyses as the amount of transition evidence varies.
# Reproduces the "Behavior under increasing sparsity" subsection (Figure).
#
# Ground truth: PFO closure genuinely lowers the per-step recurrent-stroke
# probability from 0.014 (medical therapy) to 0.010 (closure), but not by
# enough to justify its one-time cost. The true incremental net monetary
# benefit is negative, so the correct decision is medical therapy. Under
# sparse evidence the closure arm frequently records a sampling zero, which
# the point analyses can misread as a highly effective treatment.

set.seed(20260621)

# Net monetary benefit of one arm (3-state cohort model of the case study).
nmb_arm <- function(p_stroke, closure = FALSE, horizon = 7,
                    lambda = 150000, discount_rate = 0.03,
                    p_death_no_stroke = 0.002, p_death_post_stroke = 0.050,
                    utility_no_stroke = 0.86, utility_post_stroke = 0.65,
                    cost_no_stroke = 200, cost_post_stroke = 8000,
                    cost_acute_stroke = 40000, cost_closure_once = 15000) {
  m <- c(1, 0, 0)
  qalys <- 0
  costs <- if (closure) cost_closure_once else 0
  for (t in 0:(horizon - 1)) {
    d <- 1 / (1 + discount_rate)^t
    qalys <- qalys + d * (m[1] * utility_no_stroke + m[2] * utility_post_stroke)
    costs <- costs + d * (m[1] * cost_no_stroke + m[2] * cost_post_stroke) +
             d * m[1] * p_stroke * cost_acute_stroke
    P <- matrix(c(1 - p_stroke - p_death_no_stroke, p_stroke, p_death_no_stroke,
                  0, 1 - p_death_post_stroke, p_death_post_stroke,
                  0, 0, 1), nrow = 3, byrow = TRUE)
    m <- as.numeric(m %*% P)
  }
  lambda * qalys - costs
}

inmb <- function(p_closure, p_medical, lambda = 150000) {
  nmb_arm(p_closure, TRUE, lambda = lambda) - nmb_arm(p_medical, FALSE, lambda = lambda)
}

pc_true <- 0.010   # true per-step recurrent-stroke probability, closure arm
pm_true <- 0.014   # true per-step recurrent-stroke probability, medical arm
lambda  <- 150000
alpha   <- 2
true_decision <- if (inmb(pc_true, pm_true, lambda) > 0) "closure" else "medical"
cat(sprintf("True INMB = %.0f  ->  correct decision: %s\n",
            inmb(pc_true, pm_true, lambda), true_decision))

Ns   <- c(25, 50, 100, 200, 400, 800, 1600, 3200)  # patient-years at risk per arm
reps <- 16000

simulate_N <- function(N) {
  emp_wrong <- dir_wrong <- idm_wrong <- idm_abstain <- idm_correct <- 0
  for (r in seq_len(reps)) {
    ec <- rpois(1, N * pc_true)   # recurrent strokes observed, closure arm
    em <- rpois(1, N * pm_true)   # recurrent strokes observed, medical arm

    # Empirical transition matrix.
    de <- if (inmb(ec / N, em / N, lambda) > 0) "closure" else "medical"

    # Precise symmetric Dirichlet posterior mean (strength alpha).
    pcd <- (ec + alpha / 2) / (N + alpha)
    pmd <- (em + alpha / 2) / (N + alpha)
    dd <- if (inmb(pcd, pmd, lambda) > 0) "closure" else "medical"

    # IDM interval: closure worst (upper p) vs medical best (lower p) gives the
    # lower INMB endpoint; the reverse gives the upper endpoint.
    lo <- nmb_arm((ec + alpha) / (N + alpha), TRUE,  lambda = lambda) -
          nmb_arm( em        / (N + alpha), FALSE, lambda = lambda)
    hi <- nmb_arm( ec        / (N + alpha), TRUE,  lambda = lambda) -
          nmb_arm((em + alpha) / (N + alpha), FALSE, lambda = lambda)
    di <- if (lo > 0) "closure" else if (hi < 0) "medical" else "abstain"

    emp_wrong <- emp_wrong + (de != true_decision)
    dir_wrong <- dir_wrong + (dd != true_decision)
    if (di == "abstain") idm_abstain <- idm_abstain + 1
    else if (di == true_decision) idm_correct <- idm_correct + 1
    else idm_wrong <- idm_wrong + 1
  }
  c(N = N,
    emp_wrong   = emp_wrong   / reps,
    dir_wrong   = dir_wrong   / reps,
    idm_wrong   = idm_wrong   / reps,
    idm_abstain = idm_abstain / reps,
    idm_correct = idm_correct / reps)
}

results <- as.data.frame(t(sapply(Ns, simulate_N)))
cat("\nDecision outcomes by patient-years per arm (alpha = 2, lambda = 150000)\n")
print(round(results, 3))

# These Monte Carlo estimates reproduce Figure "Decision outcomes under
# increasing evidence" up to simulation error.
\end{lstlisting}